\def\ps@pprintTitle{%
	\let\@oddhead\@empty
	\let\@evenhead\@empty 
	\def\@oddfoot{}%
	\let\@evenfoot\@oddfoot}
\DeclareMathOperator*{\argmax}{arg\,max}
\definecolor{green}{HTML}{00693E}
\newtheorem{thm}{Theorem}
\newtheorem*{lem}{Lemma}
\newtheorem*{appendixthm}{Theorem}
\begin{document}


\begin{frontmatter}

	\title{Book-Ahead \& Supply Management for Ridesourcing Platforms }
	
	\cortext[cor1]{Corresponding author\newline
	E-mail addresses: cesaryahia@utexas.edu (C.N. Yahia), gustavo@ece.utexas.edu (G. de Veciana), sboyles@mail.utexas.edu (S.D. Boyles), jeanabourahal@utexas.edu (J.A. Rahal), michaelrstecklein@gmail.com (M. Stecklein)  }
	
	\author[label1]{Cesar N. Yahia\corref{cor1}}
	\author[label2]{Gustavo de Veciana}
	\author[label1]{Stephen D. Boyles}
	\author[label2]{Jean Abou Rahal}
	\author[label2]{Michael Stecklein}
	\address[label1]{Department of Civil, Architectural and Environmental Engineering, The University of Texas at Austin}
	\address[label2]{Department of Electrical and Computer Engineering, The University of Texas at Austin \vspace{-25pt}}

	\begin{abstract}
		Ridesourcing platforms recently introduced the ``schedule a ride'' service where passengers may reserve (book-ahead) a ride in advance of their trip. Reservations give platforms precise information that describes the start time and location of anticipated future trips; in turn, platforms can use this information to adjust the availability and spatial distribution of the driver supply. In this article, we propose a framework for modeling/analyzing reservations in time-varying stochastic ridesourcing systems. We consider that the driver supply is distributed over a network of geographic regions and that book-ahead rides have reach time priority over non-reserved rides. First, we propose a state-dependent admission control policy that assigns drivers to passengers; this policy ensures that the reach time service requirement would be attained for book-ahead rides. Second, given the admission control policy and reservations information in each region, we predict the ``target" number of drivers that is required (in the future) to probabilistically guarantee the reach time service requirement for stochastic non-reserved rides. Third, we propose a reactive dispatching/rebalancing mechanism that determines the adjustments to the driver supply that are needed to maintain the targets across regions. For a specific reach time quality of service, simulation results using data from Lyft rides in Manhattan exhibit how the number of idle drivers decreases with the fraction of book-ahead rides. We also observe that the non-stationary demand (ride request) rate varies significantly across time; this rapid variation further illustrates that time-dependent models are needed for operational analysis of ridesourcing systems.
		
	\end{abstract}
	
	\begin{keyword}
		ride-hailing \sep book-ahead\sep reservation \sep admission control \sep  supply management

	\end{keyword}
	
\end{frontmatter}

\section{Introduction}
Recent growth of ridesourcing services is further exacerbating fleet management challenges associated with dynamic and spatially asymmetric passenger demands. Ridesourcing platforms (e.g., Uber and Lyft) need to locate a sufficient number of drivers near anticipated passenger demand to reduce the reach time (i.e., the customer wait time between ride request and the arrival of a driver). However, an abundance of drivers may lead to increased driver idle time. Thus, with the objective of guaranteeing low customer waiting times and low driver idle time, the following questions arise: how many drivers should a ridesourcing platform supply?, and, how should the platform spatially manage idle drivers based on anticipated demand? 

In this article, the primary objective is to investigate the role of book-ahead/reserved rides in the management of driver supply. Reservations give precise information characterizing the start time and location of anticipated trips; in turn, the platform can use this information to adjust the availability and spatial distribution of its driver supply. Thus, given a reach time service requirement that the platform seeks to maintain, we analyze the impact of reservations on the number of drivers supplied throughout the network. Moreover, since passengers that schedule a ride in advance expect the driver to arrive within a desired pickup window, our analysis incorporates such priority of book-ahead rides over non-reserved rides.

In practice, ridesourcing platforms have several control levers that they can use to manage driver supply. These levers include earning guarantees for new drivers, bonuses, and heat maps that show high demand locations where drivers earn more due to surge pricing \citep{Lyft2019a, Lyft2019b}. In addition, as implemented by Lyft in New York City, platforms can restrict the number of active drivers or force them to drive towards high demand areas if they wish to remain online \citep{Lyft2019}.

The proposed supply management framework parallels existing research on ridesourcing systems \citep{wang2019b, lei2019, djavadian2017}. The majority of existing studies assume a fixed number of driver supply and/or steady-state (equilibrium) conditions. However, it is increasingly apparent that demand and supply patterns in ridesourcing systems are time-varying. In addition, these variations in demand and supply occur at a fast pace, and the system may never attain a steady state equilibrium. 

Thus, our proposed framework for analyzing reservations in ridesourcing systems focuses on the \textit{transient} nature of time-varying stochastic demand/supply patterns. Precisely, for any future point in time, we seek to probabilistically characterize the total number of active (non-idle) drivers; this time-dependent probabilistic characterization is determined by the fraction of book-ahead rides, the stochasticity of non-reserved rides, the anticipated time-varying profile of book-ahead rides, and control policies that aim to maintain reach time priority for book-ahead rides. In more detail, as shown in Figure \ref{fig:frame}, the proposed framework consists of the following three components for managing driver supply:

\begin{enumerate}
	\item We develop a state-dependent admission control policy that assigns drivers to passengers. The objective of this control policy is to guarantee the reach time service requirement for book-ahead rides. Effectively, the admission control policy ensures that there is a sufficient number of drivers near the location of anticipated book-ahead rides such that the driver can reach the passenger within the pickup window.
	\item Given this admission control policy and reservations information, we predict the ``target" number of drivers that is required (in the future) to \textit{probabilistically} guarantee the reach time service requirement for stochastic non-reserved rides. The target computations are derived from an upper bound on the time-dependent probability that a non-reserved ride will experience waiting times in excess of the reach time service requirement, and this upper bound can be evaluated using transient analysis of $\text{M}_{t}/\text{GI}/\infty$ queues.
	\item We develop a minimum cost flow driver dispatching/rebalancing mechanism that seeks to maintain the targets across regions. In particular, due to the transition of drivers across geographic regions and the associated passenger demand patterns, the driver supply in a specific region may deviate from the predicted target. Thus, the proposed minimum cost flow mechanism determines the adjustments to the driver supply that are needed to maintain the targets throughout the network.
\end{enumerate}

\begin{figure}
	\centerline{\includegraphics[width=1.1\textwidth]{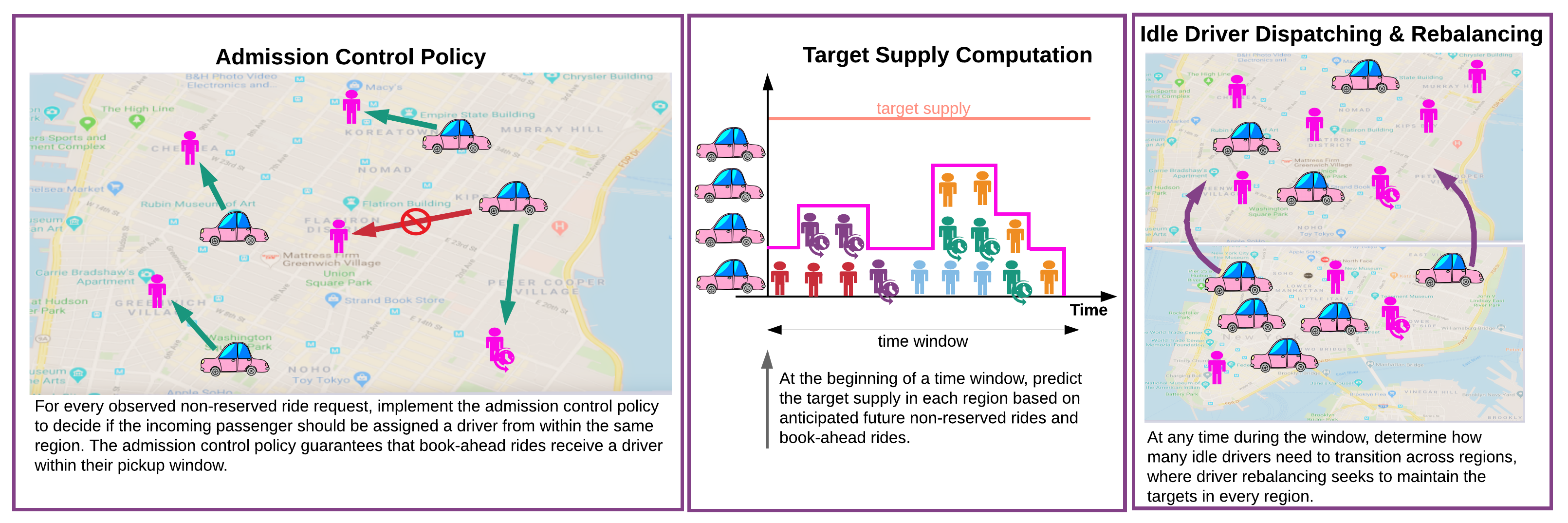}}
	\caption{Proposed framework for assigning drivers to passengers to guarantee the arrival of drivers to book-ahead rides within the pickup window, computing the target supply, and rebalancing drivers across regions to maintain the targets. }
	\label{fig:frame}
\end{figure}

The remainder of this article proceeds as follows: In Section \ref{sec:litrev} we review related work addressing the operation of ridesourcing systems. Section \ref{sec:sysmod} describes the proposed model for analyzing time-dependent ridesourcing dynamics. Section \ref{sec:adcont} presents the admission control policy. Section \ref{sec:targets} derives an upper bound on the performance of the admission control policy and computes the target supply. Section \ref{sec:vehdip} presents the driver dispatching/rebalancing mechanism. Section \ref{sec:manhsim} exhibits simulation results using data from Lyft operations in Manhattan. Section \ref{sec:conc} concludes the article.

\section{Related Work}
\label{sec:litrev}
Ridesourcing platforms are aggressively implementing supply and demand management strategies that drive their expansion into new markets \citep{nie2017}. These strategies can be broadly classified into one or more of the following categories: pricing, fleet sizing, empty vehicle routing (rebalancing), or matching passengers to drivers. Apart from increasing their market share, platforms seek to improve their operational efficiency by minimizing the spatio-temporal mismatch between supply and demand \citep{zuniga-garcia2020}. In this section, we provide a brief survey of existing methods that are used to analyze the operations of ridesourcing platforms.

\subsection{Equilibrium analysis of ridesourcing systems}
The majority of existing studies on ridesourcing systems focus on analyzing interactions between driver supply and passenger demand under \textit{static} equilibrium conditions. These studies seek to evaluate the market share of ridesourcing platforms, competition among platforms, and the impact of ridesourcing platforms on traffic congestion \citep{di2019,bahat2016,wang2018,ban2019,qian2017a}. In addition, following \cite{yang2011}, researchers examined the relationship between customer wait time, driver search time, and the corresponding matching rate at market equilibrium \citep{zha2016, xu2019}. Recently, \cite{di2018} incorporated ridesharing user equilibrium in a network design problem; \cite{zha2018a} proposed an equilibrium model to investigate the impact of surge pricing on driver work hours; \cite{zhang2019a} studied passenger pooling under market equilibrium for different platform objectives and regulations; and \cite{rasulkhani2019} generalized a static many-to-one assignment game that finds equilibrium through matching passengers to a set of routes. While static equilibrium analysis provides valuable strategic decision-making insights, it fails to address stochasticity and time-dependence in ridesourcing dynamics.

\subsection{Steady state analysis of stochasticity in ridesourcing systems}
To investigate stochasticity in demand/supply management, researchers have developed queueing theoretic models for ridesourcing systems. In particular, closed queueing networks were used to analyze rebalancing and pricing policies \citep{banerjee2017,braverman2019, zhang2016}. In these closed queueing networks, the difficulty in designing supply management strategies arises from equilibrium (steady-state) constraints that result in high dimensional non-convex problems \citep{banerjee2017}. Other queueing based approaches include a double-ended queue to characterize stochasticity in matching \citep{xu2019} and an $\text{M}/\text{G}/\text{N}$ queue where each driver is considered to be a server \citep{li2019}. Spatial stochasticity associated with matching was also investigated using Poisson processes to describe the distribution of drivers near a passenger \citep{zhang2019a, zhang2019b, chen2019}. The previously mentioned studies focus on steady-state (equilibrium) analysis that disregards the time-dependent variability in demand/supply patterns. Furthermore, temporal variations in demand/supply patterns may occur rapidly, and the system may not attain the steady-state equilibrium conditions \citep{ozkan2019,braverman2019}. In addition, policies generated from steady-state optimization in closed queueing networks are open-loop (static); this implies that the policies do not react to the time-dependent stochastic state of the system.

\subsection{Time-varying ridesourcing dynamics}
The importance of time dynamics has been emphasized in recent articles that design time-dependent demand/supply management strategies \citep{ramezani2018}. \cite{wang2019} proposed a dynamic user equilibrium approach for determining the optimal time-varying driver compensation rate. Similarly, \cite{nourinejad2019} developed a dynamic model to study pricing strategies; their model allows for pricing strategies that incur losses to the platform over short time periods (driver wage greater than trip fare), and they emphasized that time-invariant static equilibrium models are not capable of analyzing such policies. An alternative dynamic model was proposed by \cite{daganzo2019}; however, the authors focus on the steady-state performance of their model. While these models can be used to analyze time-dependent policies, the authors do not explicitly consider the spatio-temporal stochasticity that results in the mismatch between supply and demand. 

\subsection{Analysis of stochasticity in time-varying ridesourcing dynamics}
The most common approach for analyzing time-dependent stochasticity in ridesourcing systems is to apply steady-state probabilistic analysis over fixed time intervals. However, in the context of driver rebalancing, experimental analysis by \cite{braverman2019} suggests that the time needed to converge to steady-state (equilibrium) in ridesourcing systems is on the order of 10 hours. Thus, since parameters (e.g., passenger arrival rate) vary over much shorter time intervals, the system would not reach the steady-state condition. Subsequently, \cite{braverman2019} proposed a time-dependent look-ahead policy that can be used to make rebalancing decisions at any point in time. Recent studies that addressed operational challenges in ridesourcing systems also advocate for transient analysis instead of steady-state models \citep{ozkan2019, nourinejad2019}.

Another limitation of steady-state policies is that they are independent of the system state. In particular, those policies are based on probabilistic predictions over entire time intervals, and they do not react to the stochastic system state that is realized at a specific time within the time interval. In contrast, state-dependent policies react to the observed fluctuations in the stochastic system state \citep{banerjee2018}.  

Our study falls into this category of analyzing time-dependent stochasticity in ridesourcing systems.
\begin{itemize}
	\item First, we propose a state-dependent admission control policy that reacts to the observed ride requests and available driver supply. This admission control policy ensures that the reach time service requirement is attained for book-ahead rides by choosing which driver to assign to every realized non-reserved ride request.
	\item Second, in a predictive approach over an upcoming time-interval, we provide an upper bound on the performance of the state-dependent admission control policy; precisely, the performance of the policy is measured in terms of the probability that the reach time service requirement would be violated for a non-reserved ride. In contrast to steady-state methods, we use \textit{transient} analysis of $\text{M}_{t}/\text{GI}/\infty$ to determine the aforementioned upper bound at any point in time throughout the window. In other words, we derive a time-dependent upper bound on the probability of reach time violation for non-reserved rides. Subsequently, we use the time-averaged value of the upper bound to compute the ``target" number of drivers that is required during the upcoming time window; thus, this target limits the probability of reach time service violation to be within a desired performance level.
	\item Third, we propose another reactive state-dependent policy for dispatching/rebalancing drivers across multiple regions. Given the predicted ``target" supply for an upcoming time window, the minimum cost flow dispatching/rebalancing policy seeks to maintain the targets across multiple regions. For a specific system state at some time within the time window, the dispatching/rebalancing mechanism determines the number of idle drivers that should transition to adjacent regions to maintain the targets.
\end{itemize}

\begingroup

\begin{table}
	\fontsize{9.5pt}{12pt}\selectfont
	\centering
	\caption{Table of Notation \& Definitions}
	\begin{tabular}{r c p{12cm}}
		\hline
		active driver & $\triangleq$ & drivers are active from the moment they are dispatched to pick up a passenger and until the passenger leaves the vehicle \\
		idle driver & $\triangleq$ & driver waiting to be dispatched (not active)\\
		ride initiation/start & $\triangleq$ & time driver is dispatched to pick up passenger\\
		ride completion & $\triangleq$ & time passenger leaves vehicle\\
		ride duration & $\triangleq$ & total time while driver is active (includes pick up time)\\
		$R$  & $\triangleq$ & set of regions $\{1,.. ,r,.., m\}$\\
		window $k$ & $\triangleq$ & time window $ \left( kw, (k+1)w\right]$ \\
		$w$ &  $\triangleq$ &  duration of time window\\
		$c^{k}_{r}$ & $\triangleq$ & target number of drivers in region $r$ during window $k$ that would probabilistically guarantee a desired reach time service level\\	
		$f_{r}^{P,k}(t)$ & $\triangleq$ & deterministic process representing active drivers at time $t\in \left( kw, (k+1)w\right]$ that are serving requests which initiated in $r$ during \textit{previous} time windows\\
		$f_{r}^{BA,k}(t)$ & $\triangleq$ & deterministic process representing active drivers at time $t\in \left( kw, (k+1)w\right]$ that are associated with \textit{book-ahead} trips that initiate within window $\left( kw, (k+1)w\right]$ in region $r$\\
		$N^{k}_{r}(t)$  & $\triangleq$ & stochastic process representing active drivers at time $t\in \left( kw, (k+1)w\right]$ that are associated with \textit{admitted} stochastic non-reserved rides that initiate within window $\left( kw, (k+1)w\right]$ in region $r$\\
		$\lambda^{k}_{r}(t)$ & $\triangleq$ & demand rate at which stochastic non-reserved ride requests initiate during window $k$ in region $r$\\
		$g^{k}_{r}(\cdot)$ & $\triangleq$ & probability density function characterizing the ride duration (completion time - trip request time) of stochastic non-reserved rides that appear during window $k$ in region $r$\\
		$G^{k}_{r}(\cdot)$ & $\triangleq$ & cumulative density function of $g^{k}_{r}(\cdot)$\\
		$f_{r}^{A(\tau_{i}),k}(t)$ & $\triangleq$ & active drivers at time $t\in \left(\tau_{i}, \min\{\tau_{i}+D_{i}, (k+1)w \}  \right]$ corresponding to non-reserved rides that were \textit{previously admitted} between $\left( kw, \tau_{i} \right]$ in region $r$\\
		$\tau_{i}$ & $\triangleq$ & arrival time of the $i^{\text{th}}$ non-reserved ride request\\
		$D_{i}$ & $\triangleq$ & ride duration of the $i^{\text{th}}$ non-reserved ride\\
		$\gamma_{i}$ & $\triangleq$ & indicator function/random variable characterizing the event that the $i^{\text{th}}$ non-reserved ride request is admitted\\
		$B_{r}^{k}$ & $\triangleq$ &  average blocking probability during window $k$ in region $r$\\
		$\delta$ & $\triangleq$ & desired reach time quality of service for non-reserved rides (upper bound on the average blocking probability)\\
		$N^{k,\infty}_{r}  \left(  t' \right)$ & $\triangleq$ & number of busy servers at time $t'\in \left( 0, w\right]$ in a transient $\text{M}_{t}/\text{GI}/\infty$ queue that starts empty at $t'=0$; equivalently, the number of active non-reserved rides assuming that all stochastic non-reserved requests are admitted \\
		$\rho^{k}_{r}(t')$ & $\triangleq$ & time-dependent mean/variance of the Poisson distribution characterizing $N^{k,\infty}_{r}  \left(  t' \right)$ at time $t'\in \left( 0, w\right]$\\
		$a_{r}$ & $\triangleq$ & number of active drivers in region $r$\\
		$e_{r}$ & $\triangleq$ & number of idle drivers in region $r$\\
		$s_{r}^{v}$ & $\triangleq$ & virtual supply in region $r$ representing drivers in excess of the target $c^{k}_{r}$ that can be removed from region $r$\\
		$d_{r}^{v}$ & $\triangleq$ & virtual demand in region $r$ representing drivers that should be added to region $r$ to meet the target $c^{k}_{r}$\\
		$\Delta_{r}$ & $\triangleq$ & if region $r$ has virtual demand, then $\Delta_{r}=-d_{r}^{v}$; otherwise, if the region has virtual supply, then $\Delta_{r}=s_{r}^{v}$ \\
		$h_{ij}$ & $\triangleq$ & recommended driver transitions between region $i$ and $j$\\
		$\mathbf{1}\{  \cdot\}$ & $\triangleq$ & indicator function or random variable\\
		\hline
	\end{tabular}
	\label{tab:TableOfNotations}
\end{table}

\endgroup

\section{System Model}
\label{sec:sysmod} 
In this section, we describe a general model for time-varying dynamics in ridesourcing systems. The proposed model represents the number of future \textit{active} rides that initiate in a region. A ride/driver is active from the moment the driver is dispatched to pick up the passenger until the trip is completed. For non-reserved rides, the ride becomes active at the same time as the request is initiated. On the other hand, for book-ahead rides, there is a lag between the time that the request is initiated and the time that the drivers is dispatched to pick up the passenger. While active, drivers are associated with the passenger and can not take on other requests. The ride duration (service time) is the time spent while the driver is active which includes the pick up time. A ride starts when the driver becomes active and ends when the driver is idle again.

The active rides are represented over a set of geographic regions $R=\{1,.. , m\}$. These regions are sufficiently small that if a ride request initiates in a region and the assigned driver is operating in the same region, then the reach time is within a desired service level. In other words, if we want the reach time to be under 10 minutes, then the time it takes to drive from any point to any other point within the defined region should be under 10 minutes. 

Consequently, we incorporate reservations by providing reach-time priority for book-ahead rides. In particular, for a driver to arrive within the book-ahead ride pickup window, the driver must be geographically close to the passenger at the anticipated trip start time. Thus, we consider that book-ahead ride requests must be assigned a driver from within the same region in which the request initiates, and that satisfying the reach time service requirement for book-ahead rides is equivalent to a driver arriving to the passenger within the pickup window. In Section \ref{sec:adcont}, we design an admission control policy that guarantees that book-ahead rides will be assigned a driver from within the same region.   

In the proposed ridesourcing model, we do not explicitly analyze ridesharing (i.e., passenger pooling); however, the predicted number of active rides would be a conservative estimate on the corresponding value in ridesharing systems. Furthermore, for tractable target computations, we examine each region separately. In other words, the admission control and corresponding targets assume passengers remain within the zone, disregarding the variation in destinations. Then, to account for the spatial distribution of passenger destinations and the associated movement of drivers across regions, we implement a min-cost flow rebalancing methods that maintains the targets across regions. Note that the targets themselves represent a desired number of drivers that is determined by passenger demand; this implies that the targets do not depend on the stochasticity of drivers entering and exiting the system.

We proceed by describing the model for active rides in each region. For each region, this model consists of processes representing book-ahead rides and non-reserved stochastic rides. The processes form the basis of subsequent sections that discuss the admission control policy and the computation of targets.

\subsection{Time-varying profiles representing rides that will be active in the future}
\label{sec:sysmod-proc}

\begin{figure}
	\centerline{\includegraphics[width=0.8\textwidth]{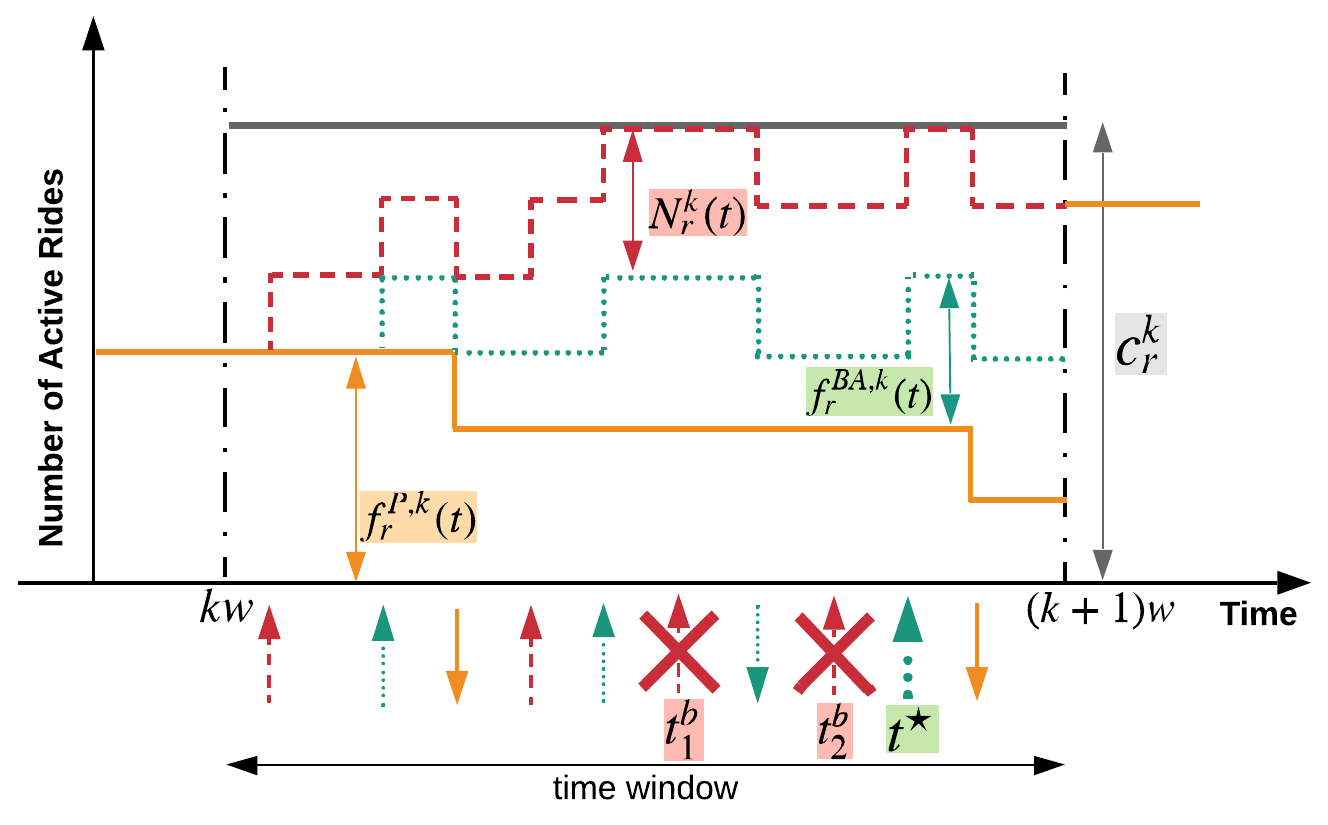}}
	\caption{System model characterizing the \textit{cumulative} number of rides that will be active in the future at time $t\in\left( kw, (k+1)w\right]$. Arrows pointing upwards indicate ride start time. Arrows pointing downwards indicate ride completion. Solid lines correspond to $f_{r}^{P,k}(t)$, dotted lines correspond to $f_{r}^{BA,k}(t)$, and dashed lines correspond to $N^{k}_{r}(t)$. Non-reserved requests marked with an ``X" are blocked requests.}
	\label{fig:profile}
\end{figure}

In each region $r \in R$, we represent ridesourcing dynamics over future time windows of length $w$. At the beginning of each window $k$, corresponding to time interval $ \left( kw, (k+1)w\right]$, the ridesourcing platform 
can characterize three processes (two deterministic and one stochastic) that will be realized during the upcoming window $ \left( kw, (k+1)w\right]$. The processes represent \textit{active} drivers at time $t \in \left( kw, (k+1)w\right]$ that are serving requests initiated within the region.

First, we assume that the platform knows the anticipated start time for \textit{book-ahead} rides that will initiate during window $k$. We also assume that the platform can accurately estimate the corresponding ride duration (i.e. the platform has full trip information for future book-ahead rides). Thus, at the start of window $k$, the platform can characterize the \textit{deterministic} process $\{f_{r}^{BA,k}(t):t\in \left( kw, (k+1)w\right] \}$ that represents the number of active drivers at time $t$ associated with book-ahead trips that will initiate in region $r$ within window $k$.

Second, at the beginning of time window $\left( kw, (k+1)w\right]$, currently active drivers serving rides that started in region $r$ prior to time $t=kw$ are known to the platform. For those \textit{previously observed} trips, we assume that the platform can accurately estimate the trip completion time. Thus, at the start of window $k$, the platform can characterize the deterministic process $\{f_{r}^{P,k}(t):t\in \left( kw, (k+1)w\right] \}$. This process represents the number of active drivers at time $t$ that are serving rides started in region $r$ during previous time windows. In other words, those are previously observed rides that haven't ended yet and may correspond to either passenger type (book-ahead or non-reserved).

Third, at the beginning of window $k$, the platform also anticipates \textit{non-reserved} stochastic rides that will arise throughout the upcoming window in region $r$. For those rides, we assume that the platform can estimate the demand (ride request) rate $\{\lambda^{k}_{r}(t):t\in \left( kw, (k+1)w\right] \} $. We also assume that the platform can estimate a general distribution $g^{k}_{r}(\cdot)$ that corresponds to the ride duration (the CDF of $g^{k}_{r}(\cdot)$ is $G^{k}_{r}(\cdot)$), and we consider that the duration of any specific non-reserved trip is independent of other trips. Then, we define a stochastic process $\{N^{k}_{r}(t):t\in\left( kw, (k+1)w\right] \}$ that represents the number of active drivers at time $t$ associated with \textit{admitted} stochastic rides which initiate in region $r$ during window $k$. In this case, a non-reserved ride request would be admitted if it is assigned a driver from within the same region.

The deterministic processes  $\{f_{r}^{P,k}(t), f_{r}^{BA,k}(t) : t\in \left( kw, (k+1)w\right] \}$ and the stochastic process $ \{  N^{k}_{r}(t) : t\in \left( kw, (k+1)w\right] \} $ are illustrated in Figure \ref{fig:profile}. The figure shows the \textit{cumulative} number of active drivers at time $t\in \left( kw, (k+1)w\right] \}$.

\begin{figure}
	\centerline{\includegraphics[width=0.8\textwidth]{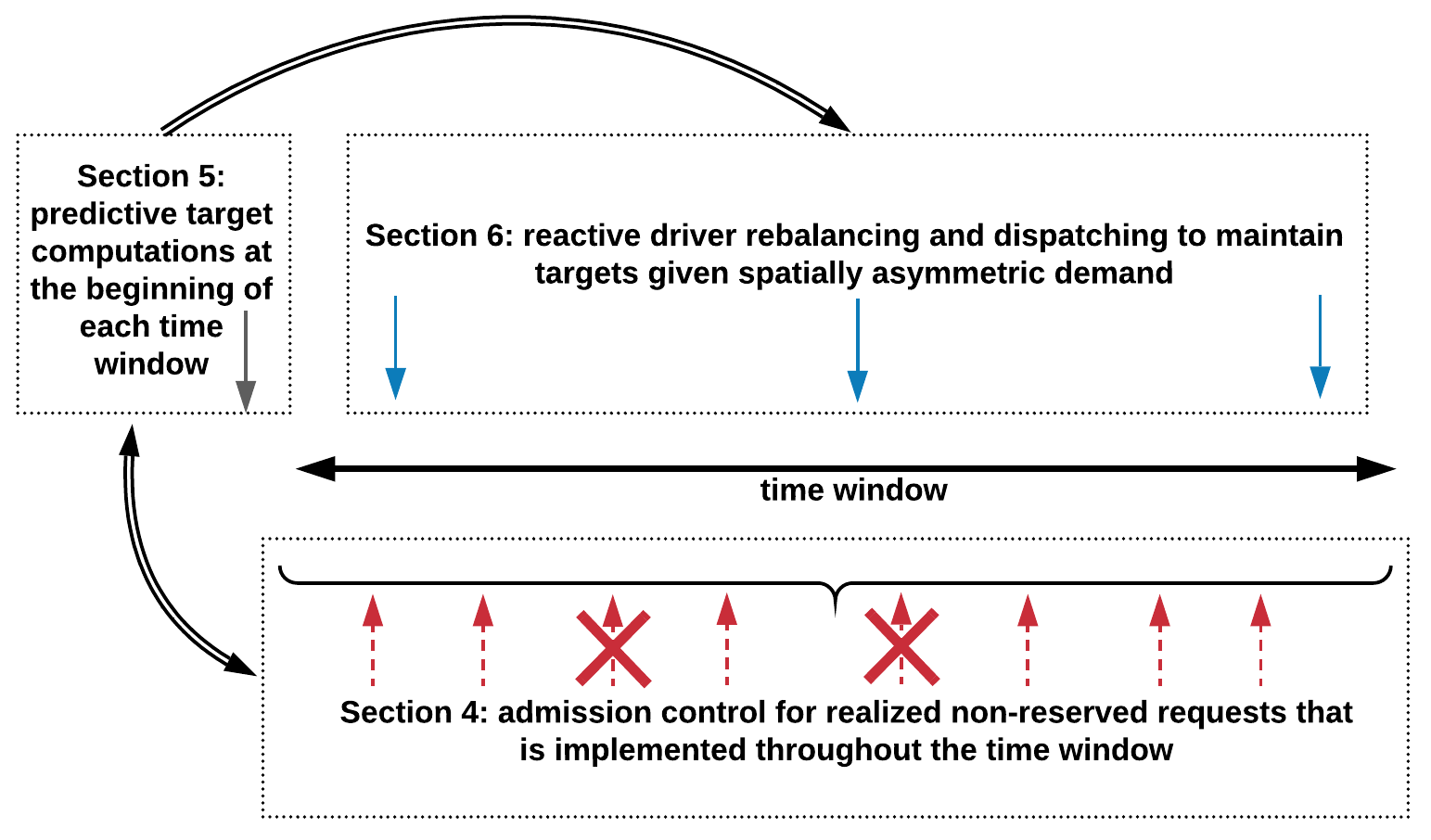}}
	\caption{ Implementation of the proposed framework across time windows.  }
	\label{fig:rel}
\end{figure}

The next section describes the admission control policy that decides whether to admit non-reserved rides based on the difference between the predicted targets and the number of active drivers. The admission control policy is state-dependent such that the admission decision is determined for each ride request once the request is observed. In more detail, the admission decision depends on the current \textit{known} state of the system for the entire duration that the observed ride will be active. Given this policy, we discuss in Section \ref{sec:targets} how the targets are evaluated at the beginning of the window. However, to compute the targets, we refer to the \textit{predicted} future system state under the control policy, and we resort to a probabilistic characterization of the anticipated non-reserved rides (i.e., we further analyze the stochastic process $\{N^{k}_{r}(t):t\in\left( kw, (k+1)w\right] \}$). In other words, the admission control policy uses the targets in determining the \textit{deterministic} admission decisions while the targets are evaluated using the predicted \textit{stochastic} system state that will arise under the control policy. Then, in Section \ref{sec:vehdip}, we present the driver dispatching and rebalancing mechanism that maintains the targets given the \textit{observed} demand patterns. Figure \ref{fig:rel} illustrates the relationship between different components of this article and the time at which those components would be implemented.

\section{Admission Control Policy}
\label{sec:adcont}
In this section, we present an admission control policy that is used to assign drivers to realized non-reserved ride requests. In each region, when a non-reserved ride request is observed, the proposed state-dependent control policy determines whether the request should be \textit{admitted} or \textit{blocked}. If the request is admitted, then a driver from within the same region is assigned to serve the request. 

The admission decision is based on the supply in the region, the anticipated book-ahead rides, and the previously admitted non-reserved rides. In particular, the policy seeks to guarantee that a driver from within the same region would be available to serve anticipated future book-ahead rides. Thus, admission control aims to guarantee that drivers arrive within the pickup window for future book-ahead rides. Since the same policy is implemented for each region, we restrict our discussion in this section to a single region $r\in R$.

At any time $t\in \left( kw, (k+1)w\right]$, the admission control policy determines if idle drivers will be available in the region by comparing the number of active rides to the \textit{target supply} $c^{k}_{r}$. The target supply $c^{k}_{r}$, illustrated in Figure \ref{fig:profile}, is the total number of drivers associated with region $r$ during window $k$; this total includes drivers that are serving ride requests initiated in region $r$ \textit{and} drivers idling in region $r$. The target $c^{k}_{r}$ represents a desired level of driver supply that would probabilistically guarantee the reach time service requirement for non-reserved rides (Section \ref{sec:targets}). The admission control policy assumes that the targets $c^{k}_{r}$ will be maintained in each region $r$ throughout the time window $k$. For tractable computation, the admission control policy also assumes that the passengers destinations remain within the region (in Section \ref{sec:vehdip}, we devise a driver dispatching/rebalancing mechanism that considers the spatial distribution of demand and seeks to maintain the target across regions).

\subsection{Policy Implementation}
A non-reserved ride request is admitted if, upon admission, the total number of active rides does not exceed the target supply for the entire ride duration. Once a non-reserved ride request is observed, the associated ride duration would be also revealed to the platform. Then, there are two cases where the admission control policy would \textit{block} the non-reserved ride request: (1) There are not enough available drivers within the region at the time of request initiation; this is illustrated in Figure \ref{fig:profile} at time $t^{b}_{1}$, where the sum $N^{k}_{r}(t^{b}_{1})+f_{r}^{BA,k}(t^{b}_{1})+f_{r}^{P,k}(t^{b}_{1})$ is equal to the target $c^{k}_{r}$. In other words, admission of the non-reserved ride would result in the total number of active rides \textit{exceeding} the target supply at the time of request initiation. (2) Admission of the non-reserved ride would result in reach time service violation for an anticipated book-ahead ride; in Figure \ref{fig:profile}, admission of the non-reserved ride request that initiates at time $t^{b}_{2}$ would lead to reach time violation for the book-ahead trip that initiates at $t^{\star}$ (considering that the observed ride duration of the request that initiates at $t^{b}_{2}$ extends beyond $t^{\star}$). In other words, if the non-reserved ride was admitted at $t^{b}_{2}$, then at $t^{\star}$ (just before the book-ahead request is anticipated) the sum $N^{k}_{r}(t^{\star})+f_{r}^{BA,k}(t^{\star})+f_{r}^{P,k}(t^{\star})$ would be equal to the target supply $c^{k}_{r}$; this implies that the total number of active rides would \textit{exceed} the target supply when the book-ahead ride at  $t^{\star}$ starts (equivalently, the book-ahead ride would not be assigned a driver from within the same region). 

In more detail, let $\tau_{i}$ be the arrival time of the $i^{\text{th}}$ non-reserved ride request, and let $D_{i}$ be the corresponding ride duration. In addition, let $\gamma_{i}$ be an indicator function that takes the value one if the $i^{\text{th}}$ non-reserved ride request is admitted. Equation \ref{eqn:gamma} gives the expression for $\gamma_{i}$ (i.e., Equation \ref{eqn:gamma} represents the condition for admission). In Equation \ref{eqn:gamma}, $f^{A(\tau_{i}),k}_{r}(t)$ represents \textit{previously} \textit{admitted} non-reserved rides that would be active at time $ t\in \left( \tau_{i}, \min \left\lbrace \tau_{i}+D_{i}, (k+1)w \right\rbrace \right] $. In other words, $f^{A(\tau_{i}),k}_{r}(t)$ represents previously admitted non-reserved rides that would be active during the time that the $i^{\text{th}}$ non-reserved ride request is being served. Note that the projected ride duration of the $i^{\text{th}}$ non-reserved user is restricted to $t \in \left( \tau_{i}, \min\{\tau_{i}+D_{i}, (k+1)w \} \right] $ instead of $t \in \left( \tau_{i}, \tau_{i}+D_{i} \right] $ since admission control decisions are made per window $k$ (i.e., the rides whose duration extends beyond $t=(k+1)w$ would become part of $f_{r}^{P,k+1}(t)$).
\begin{equation}
\begin{aligned}
\gamma_{i}=\mathbf{1} \left\lbrace 1+ f_{r}^{P,k}(t) + f_{r}^{BA,k}(t) + f^{A(\tau_{i}),k}_{r}(t) \leq c^{k}_{r}, \quad \forall t\in \left( \tau_{i}, \min \left\lbrace \tau_{i}+D_{i}, (k+1)w \right\rbrace \right] \right\rbrace
\end{aligned}
\label{eqn:gamma}
\end{equation}

If we let $\tau_{n}$ and $D_{n}$ be the arrival time and ride duration of the $n^{\text{th}}$ previously observed non-reserved ride (where $n\in \{1,..., i-1\}$), we can express $f^{A(\tau_{i}),k}_{r}(t)$ as shown in Equation \ref{eqn:prevAdm}. In this equation, $\mathbf{1}\{\tau_{n}+D_{n}>t \}$ takes the value one if the $n^{\text{th}}$ previously observed non-reserved ride would be active at time $t$, and $\gamma_{n}$ takes the value one if the $n^{\text{th}}$ non-reserved request was admitted.
\begin{equation}
\begin{aligned}
f^{A(\tau_{i}),k}_{r}(t)= \sum_{n=1}^{i-1} \mathbf{1}\{ \tau_{n}+D_{n}>t \}\gamma_{n}, \quad  t\in \left( \tau_{i}, \min \left\lbrace \tau_{i}+D_{i}, (k+1)w \right\rbrace \right] 
\end{aligned}
\label{eqn:prevAdm}
\end{equation}

We emphasize that the control policy is state-dependent and applied upon the receipt of each ride request; this implies that the state of the system is deterministic and all the variables (including $\tau_{n}, D_{n},\gamma_{n},f^{A(\tau_{i}),k}_{r}(t),\tau_{i}, D_{i},\gamma_{i}$) are known at time $\tau_{i}$. Then, the admission decision for the $i^{\text{th}}$ non-reserved user follows directly from evaluating expressions \ref{eqn:gamma} and \ref{eqn:prevAdm}.


A non-reserved ride request that is blocked may be assigned a driver from an external region (i.e., the passenger will experience a long wait time). Alternatively, blocked non-reserved requests may be dropped from the system, where this indicates a passenger canceling the ride due to the extended wait time. In the simulation experiments (Section \ref{sec:manhsim}), we follow the latter approach.

\section{Target Supply for Probabilistically Guaranteeing the Reach Time Quality of Service}
\label{sec:targets}
While the admission control policy is a state-dependent policy that is applied during the time window $\left( kw, (k+1)w\right]$, it is based on the target supply $c^{k}_{r}$ that is determined at the beginning of the time window $t=kw$. For a specific region $r$, the target $c^{k}_{r}$ represents the total number of drivers that is required during window $k$ to probabilistically guarantee the reach time service requirement for non-reserved rides. Drivers are considered to be associated with a region if they are either serving requests that initiated in the region or they are idle within the region. In this section, we discuss how the targets can be computed at the beginning of the time window. First, we derive a time-dependent upper bound on the blocking probability corresponding to the admission control policy. Then, we determine the target number of drivers that limits the time-averaged blocking probability to be below a certain quality of service threshold. In turn, limiting the time-averaged blocking probability is equivalent to limiting the probability of reach time violation for non-reserved ride requests.

In Equations \ref{eqn:gamma} and \ref{eqn:prevAdm}, representing the admission control policy when the $i^{\text{th}}$ non-reserved ride request is received, the values of all the variables are known (for every non-reserved ride request that was previously received, the trip information would have been revealed to the platform). However, at the beginning of the time window, the platform would not know the arrival time, ride duration, and admission decision of a future non-reserved request. Therefore, at the beginning of the time window, $\tau_{n}, D_{n},\gamma_{n},f^{A(\tau_{i}),k}_{r}(t),\tau_{i}, D_{i},\gamma_{i}$ are all random variables. To express the probability of admission, we can re-write Equation \ref{eqn:gamma} as shown in Equation \ref{eqn:padm}. Hence, Equation \ref{eqn:block} represents the probability that the $i^{\text{th}}$ non-reserved ride request would be blocked.
\begin{equation}
\begin{aligned}
P(\gamma_{i}=1)=P\left( 1+ f_{r}^{P,k}(t) + f_{r}^{BA,k}(t) + f^{A(\tau_{i}),k}_{r}(t) \leq c^{k}_{r}, \quad \forall t\in \left( \tau_{i}, \min \left\lbrace \tau_{i}+D_{i}, (k+1)w \right\rbrace \right] \right)
\end{aligned}
\label{eqn:padm}
\end{equation}
\begin{equation}
\begin{aligned}
&P(\gamma_{i}=0)=1-P(\gamma_{i}=1)=\\ 
&P\left( \exists t\in \left(\tau_{i}, \min\{\tau_{i}+D_{i}, (k+1)w \}  \right]: 1+ f_{r}^{P,k}(t) + f_{r}^{BA,k}(t) + f^{A(\tau_{i}),k}_{r}(t) > c^{k}_{r} \right)= \\ 
&P\left( \exists t\in \left(\tau_{i}, \min\{\tau_{i}+D_{i}, (k+1)w \}  \right]: 1+ f_{r}^{P,k}(t) + f_{r}^{BA,k}(t) + \sum_{n=1}^{i-1} \mathbf{1}\{ \tau_{n}+D_{n}>t \}\gamma_{n} > c^{k}_{r} \right)
\end{aligned}
\label{eqn:block}
\end{equation}

Observe that for \textit{predictive} target computations, $f^{A(\tau_{i}),k}_{r}(t)= \sum_{n=1}^{i-1} \mathbf{1}\{ \tau_{n}+D_{n}>t \}\gamma_{n}$ represents stochastic non-reserved ride requests that will be admitted between $\left( kw, \tau_{i}\right]$ \textit{and} will be active at time $t\in \left( \tau_{i}, \min \left\lbrace \tau_{i}+D_{i}, (k+1)w \right\rbrace \right]$. Recall that future stochastic non-reserved ride requests appear at a demand rate $\{\lambda^{k}_{r}(t):t\in \left( kw, (k+1)w\right] \} $ and the corresponding ride durations are generally distributed according to a distribution $g^{k}_{r}(\cdot)$. Previously, we defined the stochastic process $\{N^{k}_{r}(t):t\in\left( kw, (k+1)w\right] \}$ that represents the number of future active drivers associated with \textit{admitted} non-reserved rides. Notice that $N^{k}_{r}(\tau_{i})=f^{A(\tau_{i}),k}_{r}(\tau_{i})$ is the number of admitted non-reserved ride requests that will be active at time $\tau_{i}$. However, for $t\in \left( \tau_{i}, \min \left\lbrace \tau_{i}+D_{i}, (k+1)w \right\rbrace \right]$, $N^{k}_{r}(t)\neq f^{A(\tau_{i}),k}_{r}(t)$ since $N^{k}_{r}(t)$ includes non-reserved ride requests that will be admitted between $\left( kw,t \right]$ while $f^{A(\tau_{i}),k}_{r}(t)$ is restricted to non-reserved ride requests admitted between $\left( kw, \tau_{i}\right]$.

To determine the target supply $c^{k}_{r}$, we need to evaluate the blocking probability expression in Equation \ref{eqn:block} for different values of $c^{k}_{r}$. However, this probability expression is difficult to analyze due to the dependence of $\gamma_{i}$ (admission of $i^{\text{th}}$ non-reserved request) on the random variables $\tau_{n}, D_{n}$ (arrival time, ride duration) and $\gamma_{n}$ (admission) associated with previously arriving non-reserved ride requests $n \in \{1,..., i-1\}$. In addition, the arrival time $\tau_{i}$ of the $i^{\text{th}}$ non-reserved ride request also depends on the arrival time $\tau_{n}$ of all previous requests. Moreover, the correlations between the random variables have to be considered over the entire time interval $\left(\tau_{i}, \min\{\tau_{i}+D_{i}, (k+1)w \}  \right]$ and this interval also has time-varying functions $f_{r}^{P,k}(t)$ and $f_{r}^{BA,k}(t)$ that impact the admission probability.

Thus, instead of attempting to evaluate Equation \ref{eqn:block}, we provide an upper bound on the blocking probability. In particular, let $\{ N^{k,\infty}_{r}(t): t\in \left( kw, (k+1)w\right] \}$ be the number of busy servers in a \textit{transient} $\text{M}_{t}/\text{GI}/\infty$ queue that starts empty at the beginning of the window $t=kw$, where the arrivals to the $\text{M}_{t}/\text{GI}/\infty$ queue appear according to a Poisson process with rate $\{\lambda^{k}_{r}(t):t\in \left( kw, (k+1)w\right] \} $ and the service distribution is $g^{k}_{r}(\cdot)$.\\

\begin{thm}
	\label{thm:ub}
	The blocking probability, $P(\gamma_{i}=0)$, for the $i^{\text{th}}$ stochastic non-reserved ride request that appears at time $\tau_{i}$ is bounded above by $P\left( N^{k,\infty}_{r}  \left(  \tau_{i} \right) \geq  c^{k}_{r} - \displaystyle \max_{  t\in\left( \tau_{i}, (k+1)w \right] } \left[ f_{r}^{P,k}(t) + f_{r}^{BA,k}(t) \right]  \right)   $
\end{thm}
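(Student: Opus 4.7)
The plan is to prove the bound by establishing a pathwise domination between the admitted-ride process $f^{A(\tau_i),k}_r$ and the infinite-server process $N^{k,\infty}_r$, then showing that the blocking event is contained (as a subset, on the common probability space) in the bounding event. The natural coupling uses the same underlying Poisson arrival stream of non-reserved requests (with rate $\lambda^k_r(t)$) and the same i.i.d.\ service times $D_n \sim g^k_r$, but runs the admission policy on one copy and accepts all arrivals on the other.

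First I would unpack Equation~\ref{eqn:block} and rewrite the blocking event as
\[
\{\gamma_i = 0\} = \left\{ \exists\, t \in (\tau_i, \min\{\tau_i + D_i, (k+1)w\}] : f^{A(\tau_i),k}_r(t) \geq c^k_r - f^{P,k}_r(t) - f^{BA,k}_r(t) \right\},
\]
where I use that the quantities involved are integer-valued to turn the strict inequality $1 + (\cdot) > c^k_r$ into a non-strict one. The key structural observation is that, from the definition in Equation~\ref{eqn:prevAdm}, $f^{A(\tau_i),k}_r(t)$ is \emph{non-increasing} in $t$ for $t \geq \tau_i$, because no new admissions are included in the sum as $t$ grows past $\tau_i$; only completions remove indicators. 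Therefore
\[
f^{A(\tau_i),k}_r(t) \leq f^{A(\tau_i),k}_r(\tau_i) = \sum_{n=1}^{i-1} \mathbf{1}\{\tau_n + D_n > \tau_i\}\,\gamma_n \quad \text{for all } t \geq \tau_i.
\]

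Second, under the coupling, since $\gamma_n \in \{0,1\}$ and every one of the first $i-1$ non-reserved arrivals occurs in $(kw, \tau_i]$, we have the pathwise domination
\[
\sum_{n=1}^{i-1} \mathbf{1}\{\tau_n + D_n > \tau_i\}\,\gamma_n \;\leq\; \sum_{n=1}^{i-1} \mathbf{1}\{\tau_n + D_n > \tau_i\} \;\leq\; N^{k,\infty}_r(\tau_i),
\]
because the middle sum counts (a subset of the) arrivals in $(kw, \tau_i]$ still in service at $\tau_i$, which is exactly what $N^{k,\infty}_r(\tau_i)$ tallies without the admission filter. Combining the two displays yields $f^{A(\tau_i),k}_r(t) \leq N^{k,\infty}_r(\tau_i)$ uniformly in $t \in (\tau_i, \min\{\tau_i + D_i, (k+1)w\}]$.

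Finally, bounding $f^{P,k}_r(t) + f^{BA,k}_r(t)$ from above by its maximum over the larger interval $(\tau_i, (k+1)w]$ (which only loosens the bound), the blocking event is contained in
\[
\left\{ N^{k,\infty}_r(\tau_i) \geq c^k_r - \max_{t \in (\tau_i, (k+1)w]}\left[ f^{P,k}_r(t) + f^{BA,k}_r(t) \right] \right\},
\]
and monotonicity of probability gives the claimed inequality. The main delicate step is justifying the pathwise coupling in the first place: one must be careful that the arrival instants $\tau_n$ and durations $D_n$ of non-reserved requests can indeed be realized on a common probability space with $N^{k,\infty}_r$, which is legitimate because the admission policy only \emph{thins} the Poisson stream and does not alter the underlying arrival/service times. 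Once that coupling is in place, the monotonicity of $f^{A(\tau_i),k}_r$ in $t$ does the rest of the work and the bound follows without any further probabilistic calculation.
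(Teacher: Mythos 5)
Your proof is correct and follows essentially the same route as the paper's: drop the admission indicators $\gamma_n$, use the fact that the count of previously arrived rides still in service can only decrease after $\tau_i$, absorb the deterministic profiles via their maximum over $(\tau_i,(k+1)w]$, exploit integrality to pass from a strict to a non-strict inequality, and identify the resulting count with the busy-server count $N^{k,\infty}_r(\tau_i)$ of the transient $\text{M}_t/\text{GI}/\infty$ queue. Your explicit coupling/pathwise-containment framing is a slightly more careful packaging of what the paper does implicitly when it sets $N^{k,\infty}_r(\tau_i)=\sum_{n=1}^{i-1}\mathbf{1}\{\tau_n+D_n>\tau_i\}$, but it is not a different argument.
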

\begin{proof}
	See Appendix A.
\end{proof}

Given this upper bound in Theorem \ref{thm:ub}, we can limit the blocking probability at time $\tau_{i}$ to be below a certain quality of service threshold $\delta$ by ensuring that the upper bound is below $\delta$ (as shown in Inequality \ref{eqn:upperbound}). Importantly, while $P(\gamma_{i}=0)$ is difficult to evaluate as mentioned earlier, the upper bound can be evaluated for any value $c^{k}_{r}$ and at any time $\tau_{i}$ using transient analysis of $\text{M}_{t}/\text{GI}/\infty$ queues (Section \ref{sec:transanal}). Subsequently, after illustrating how the upper bound can be evaluated at any time for a specific value of $c^{k}_{r}$, we discuss (Section \ref{sec:avgblock}) how to use this upper bound to determine the target supply, where the target supply is the minimal $c^{k}_{r}$ that limits the time-averaged blocking probability to be below the threshold $\delta$.

\begin{equation}
\begin{aligned}
P(\gamma_{i}=0) \leq P\left( N^{k,\infty}_{r}  \left(  \tau_{i} \right) \geq  c^{k}_{r} - \displaystyle \max_{  t\in\left( \tau_{i}, (k+1)w \right] } \left[ f_{r}^{P,k}(t) + f_{r}^{BA,k}(t) \right]  \right) \leq \delta
\end{aligned}
\label{eqn:upperbound}
\end{equation}

\subsection{Time-Dependent Distribution of the Number of Busy Servers in an $\text{M}_{t}/\text{GI}/\infty$ Queue}
\label{sec:transanal}
To evaluate the upper bound $P\left( N^{k,\infty}_{r}  \left(  \tau_{i} \right) \geq  c^{k}_{r} - \displaystyle \max_{  t\in\left( \tau_{i}, (k+1)w \right] } \left[ f_{r}^{P,k}(t) + f_{r}^{BA,k}(t) \right]  \right)$ at time $\tau_{i}$ and for a specific $c^{k}_{r}$, we use a graphical approach that was first recognized by \cite{prekopa1958} and was subsequently further discussed in articles that analyze $\text{M}_{t}/\text{GI}/\infty$ queues \citep{foley1982, eick1993}. We show that the number of busy servers in an $\text{M}_{t}/\text{GI}/\infty$ queue that starts empty, $N^{k,\infty}_{r}  \left(  \tau_{i} \right)$, has a \textit{time-dependent} Poisson distribution, and we derive the time-dependent mean associated with this distribution. Thus, since $\displaystyle \max_{  t\in\left( \tau_{i}, (k+1)w \right] } \left[ f_{r}^{P,k}(t) + f_{r}^{BA,k}(t) \right] $ and $c^{k}_{r}$ are known values at time $\tau_{i}$, evaluating the upper bound is equivalent to computing the probability that a Poisson random variable is greater than or equal to a constant.

Referring to Figure \ref{fig:mginf}, consider stochastic arrivals to an $\text{M}_{t}/\text{GI}/\infty$ queue such that $x_{j}$ denotes the $j^{\text{th}}$ arrival time according to the Poisson process and $s_{j}$ denotes the corresponding generally distributed service time. In time window $\left( kw, (k+1)w\right]$, the $\text{M}_{t}/\text{GI}/\infty$ queue is \textit{initially empty} at time $kw$. 

\begin{figure}
	\centerline{\includegraphics[width=0.6\textwidth]{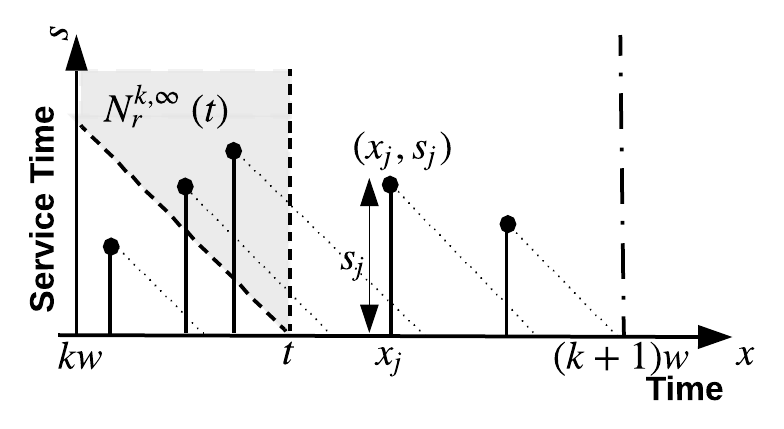}}
	\caption{Service time vs. arrival time associated with a transient $\text{M}_{t}/\text{GI}/\infty$ queue that starts empty at time $kw$. Since there are an infinite number of servers, all arrivals start being serviced immediately. The dotted diagonal lines represent the decrease in remaining service time as the user is being served. For any time $t$, the number of users still being served is equal to the number of diagonal lines that intersect a vertical line from $t$; equivalently, the number of users still being served at $t$ is the number of points in the shaded area.}
	\label{fig:mginf}
\end{figure}

We can think of $(x_{j}, s_{j})$ as a random point in the two-dimensional plane $\left( kw, (k+1)w\right] \times [0,\infty)$ that represents the arrival time and service duration. For any two-dimensional set $S$ in $\left( kw, (k+1)w\right]\times [0,\infty)$, the number of points in the set represents random sampling of the arrivals Poisson process; thus, the number of points in the set $S$ is \textit{Poisson distributed}. We also know that disjoint two-dimensional sets correspond to independent sampling of a Poisson process; this implies that the number of points in each set is independent of other disjoint sets. 

Furthermore, considering an infinitesimal two-dimensional square set with an area $ds(dx)$, we can see that the mean number of points in that set is $\lambda^{k}_{r}(x) dx \left(g^{k}_{r}(s) (ds)\right)$; this implies that the intensity of the two-dimensional Poisson distribution is $\lambda^{k}_{r}(x)g^{k}_{r}(s)$. Thus, the distribution of points defined as (arrival time, service duration) is Poisson over the two-dimensional space, and the \textit{mean} number of points for any set $S$ is given by $\int_{S}\lambda^{k}_{r}(x)g^{k}_{r}(s)dsdx$.

To determine the \textit{mean} number of busy servers $\rho^{k}_{r}(t)$, we evaluate the integral $\int_{S}\lambda^{k}_{r}(x)g^{k}_{r}(s)ds dx$ over the shaded area illustrated in Figure \ref{fig:mginf}. This shaded area represents arrivals to the $\text{M}_{t}/\text{GI}/\infty$ queue since time $kw$ that have not yet completed at time $t$. The resulting expression for $\rho^{k}_{r}(t)$ is given in Equation \ref{eqn:rho2}. If we further consider that the arrival rate $\lambda^{k}_{r}(x)$ is constant over the time window such that $\lambda^{k}_{r}(x)=\lambda^{k}_{r}$, the expression for $\rho^{k}_{r}(t)$ simplifies as shown in Equation \ref{eqn:rho2const}.

Thus, within each window, $N^{k,\infty}_{r}  \left(  \tau_{i} \right)$ is Poisson distributed with a time-dependent mean $\rho^{k}_{r}(\tau_{i})$. Given a specific value $c^{k}_{r}$, we can use this characterization of $N^{k,\infty}_{r}  \left(  \tau_{i} \right)$ to evaluate the upper bound at any time $\tau_{i}$.

\begin{equation}
\begin{aligned}
\rho^{k}_{r}(t)=\int_{kw}^{t} \int_{t-x}^{\infty}\lambda^{k}_{r}(x)g^{k}_{r}(s)dsdx 
\end{aligned}
\label{eqn:rho2}
\end{equation}
\begin{equation}
\begin{aligned}
\rho^{k}_{r}(t)&=\int_{kw}^{t} \int_{t-x}^{\infty}\lambda^{k}_{r}g^{k}_{r}(s)dsdx \\&= \lambda^{k}_{r}\left[t-kw- \int_{0}^{t-kw} G^{k}_{r}(x)dx \right]
\end{aligned}
\label{eqn:rho2const}
\end{equation}

\subsection{Target Predictions for Bounding the Time-Averaged Blocking Probability}
\label{sec:avgblock}
Knowing that we can evaluate the upper bound on the blocking probability at any time and for any $c^{k}_{r}$, we now investigate the minimal value of $c^{k}_{r}$ that limits the \textit{time-averaged} blocking probability to be below a threshold $\delta$. This minimal $c^{k}_{r}$ will be referred to as the \textit{target}, and it represents the number of drivers that the platform seeks to supply during the upcoming time window to limit reach time service violations (i.e., to limit the fraction of non-reserved requests whose reach time will exceed the reach time service requirement).

Precisely, the time-averaged blocking probability in region $r\in R$ during window $\left( kw, (k+1)w\right]$ is given in Equation \ref{eqn:avgBlock1}, where $\gamma_{t}$ is an indicator random variable that takes the value one if a passenger that arrives at time $t$ would be admitted. Since Poisson arrivals see time averages (PASTA property), the time-averaged blocking probability is equivalent to the blocking probability of a typical non-reserved ride request that appears between $\left( kw, (k+1)w\right]$. Then, the target $c^{k}_{r}$ is the desired number of drivers that restricts this time-averaged blocking probability. In other words, the target $c^{k}_{r}$ is the desired number of drivers that limits the blocking probability of a typical non-reserved ride request that will appear during the upcoming window. As previously mentioned, evaluating the blocking probability in Equation \ref{eqn:avgBlock1} is challenging. Thus, to compute the target, we use the time-averaged value of the upper bound in Theorem \ref{thm:ub}. As shown in Inequality \ref{eqn:avgBlock}, if we find the value of $c^{k}_{r}$ that limits the time-averaged upper bound to be less than the threshold $\delta$, then this $c^{k}_{r}$ will also limit the time-averaged blocking probability to be less than $\delta$. 

\begin{equation}
\begin{aligned}
B^{k}_{r} = \frac{1}{w} \int_{kw}^{(k+1)w}P(\gamma_{t}=0) dt
\end{aligned}
\label{eqn:avgBlock1}
\end{equation}

\begin{equation}
\begin{aligned}
B^{k}_{r} \leq \frac{1}{w} \int_{kw}^{(k+1)w}P\left( N^{k,\infty}_{r}  \left(  t \right) \geq  c^{k}_{r} - \displaystyle \max_{  \hat{t}\in\left( t, (k+1)w \right] } \left[ f_{r}^{P,k}(\hat{t}) + f_{r}^{BA,k}(\hat{t}) \right]  \right)dt \leq \delta 
\end{aligned}
\label{eqn:avgBlock}
\end{equation}

Therefore, as shown in Equation \ref{eqn:computeC}, we seek the minimal value $c^{k}_{r}$ that restricts $B^{k}_{r}$ to be less than or equal to the threshold $\delta$. In Equation \ref{eqn:computeC}, observe that the time-averaged upper bound on the blocking probability decreases monotonically with increasing values of $c$; consequently, since $c$ must be a non-negative integer, we can iterate through increasing integer values of $c$ until we find the minimal target $c^{k}_{r}$ that ensures that the time-averaged blocking probability is less than $\delta$ (alternatively, we may use faster line search techniques). Note that just as we can evaluate the upper bound in Theorem \ref{thm:ub} for a specific value of $c$ and at a specific time (Section \ref{sec:transanal}), we can evaluate the time-averaged upper bound for a specific value of $c$ using numerical integration.

\begin{equation}
\begin{aligned}
c^{k}_{r} = &\min_{c\geq 0, \; c \in \mathbb{Z}}  \biggl[ c: \\  &\frac{1}{w} \int_{kw}^{(k+1)w}P\left( N^{k,\infty}_{r}  \left(  t \right) \geq  c - \displaystyle \max_{  \hat{t}\in\left( t, (k+1)w \right] } \left[ f_{r}^{P,k}(\hat{t}) + f_{r}^{BA,k}(\hat{t}) \right]  \right)dt \leq \delta \biggr]
\end{aligned}
\label{eqn:computeC}
\end{equation}

The targets $c^{k}_{r}$ are computed for every region $r \in R$ at the beginning of window $k$ (i.e., at time $t=kw$). If the number of drivers supplied by the platform in each region (either idling in the region or serving requests that initiate in the region) is equal to the corresponding target, then the blocking probability for future non-reserved requests would be less than the threshold $\delta$. Thus, if the targets are provided in each region, the reach time service requirement is probabilistically guaranteed for stochastic non-reserved rides (for book-ahead rides, the reach time service requirement is guaranteed based on the admission control policy in Section \ref{sec:adcont}). Apart from target computations, the upper bound on the blocking probability can be used as a performance measure for the admission control policy, where performance of the policy refers to the probability of reach time service violation (for a given level of driver supply).

\section{Driver Dispatching \& Rebalancing Mechanism}
\label{sec:vehdip}
In this section, we develop a driver dispatching and rebalancing mechanism that aims to maintain the targets across multiple regions. The targets computed in Section \ref{sec:targets} represent a desired level of driver supply such that providing the targets in a region probabilistically guarantees the reach time service requirement for non-reserved ride requests. In practice, within the time window $\left( kw, (k+1)w\right]$, drivers serving requests that initiated in a region $r\in R$ may finish their trips in other regions. Similarly, drivers serving requests that initiated in an external region $r'\in R \backslash \{r\}$ may finish their trip in region $r$. Thus, the number of drivers associated with each region may deviate from the corresponding target $c^{k}_{r}$ due to observed origin-destination trip patterns. This section presents a dispatching/rebalancing mechanism that computes the minimum number of driver transitions that achieve the targets, where only idle drivers are allowed to transition between adjacent regions. We show that the proposed optimization formulation reduces to a \textit{minimum cost flow} formulation on a transformed network of regions.

In more detail, consider that at some time $t$ the platform aims to determine the necessary driver transitions that maintain the targets. In this section, all the defined variables represent the network conditions at time $t$; this time $t$ could be either at the beginning of time window $\left( kw, (k+1)w\right]$ or within the window. For every region $i$, let $a_{i}$ be the number of active drivers serving requests initiated in the region, and let $e_{i}$ be the number of idle drivers in the region. In addition, for every region, define a virtual supply $s^{v}_{i}$ as shown in Equation \ref{eqn:vsup}, where the virtual supply represents the number of excess drivers (beyond the target) that can transition to adjacent regions. The virtual supply $s^{v}_{i}$ is limited by the number of idle drivers in the region; thus, it is the minimum of the idle drivers $e_{i}$ and the number of drivers in excess of the target $\left(a_{i}+e_{i}\right)- c^{k}_{i}$. Similarly, define a virtual demand $d^{v}_{i}$ as shown in Equation \ref{eqn:vdem}, where the virtual demand represents the number of additional drivers needed in region $i$ to meet the target $c^{k}_{i}$ at time $t$. Furthermore, for every region $i$, define $\Delta_{i}$ as shown in Equation \ref{eqn:delta}, where $\Delta_{i}$ represents either the demand (expressed as a negative value) or the supply.

\begin{equation}
\begin{aligned}
s^{v}_{i} = 
\begin{cases}
\min \left\lbrace e_{i}, \left(a_{i}+e_{i}\right)- c^{k}_{i} \right\rbrace & \text{if} \quad c^{k}_{i}-\left( a_{i}+e_{i} \right)\leq 0\\
0 & \text{otherwise}\label{eqn:vsup}
\end{cases}
\end{aligned}
\end{equation}

\begin{equation}
\begin{aligned}
d^{v}_{i} = 
\begin{cases}
c^{k}_{i}-\left( a_{i}+e_{i} \right) & \text{if} \quad c^{k}_{i}-\left( a_{i}+e_{i} \right)>0\\
0 & \text{otherwise}
\end{cases}
\end{aligned}
\label{eqn:vdem}
\end{equation}

\begin{equation}
\begin{aligned}
\Delta_{i} = 
\begin{cases}
-\left[c^{k}_{i}-\left( a_{i}+e_{i} \right)\right] & \text{if} \quad c^{k}_{i}-\left( a_{i}+e_{i} \right)>0\\
\min \left\lbrace e_{i}, \left(a_{i}+e_{i}\right)- c^{k}_{i} \right\rbrace & \text{otherwise}
\end{cases}
\end{aligned}
\label{eqn:delta}
\end{equation}

For the regions defined in Section \ref{sec:sysmod}, we construct a directed network $G=(R,E)$. The set of regions $R$ corresponds to the nodes of the network. The set of edges $E$ includes links $(i,j)$ and $(j,i)$ for every pair of \textit{adjacent} regions $i$ and $j$ (see original network in Figure \ref{fig:transfnet}). Define $h_{ij}$ as the number of drivers that need to transition from region $i$ to the adjacent region $j$ on link $(i,j)$. The platform rebalancing optimization formulation is shown in Equations \ref{eq:obj0}--\ref{eq:cons04}. In this formulation, the platform seeks to minimize the number of driver transitions (objective \ref{eq:obj0}) while ensuring that the targets are maintained (constraint \ref{eq:const01}). In particular, constraint \ref{eq:const01} specifies that the difference between drivers leaving a region and drivers arriving to a region should match the supply/demand in the region. Constraint \ref{eq:const02} restricts the number of drivers leaving a region to the number of idle drivers in the region; in other words, this constraint ensures that the optimal solution to formulation \ref{eq:obj0}--\ref{eq:cons04} (if it exists) describes the number of \textit{idle} drivers transitions to \textit{adjacent} regions (i.e., idle drivers do not transition across multiple regions). The remaining constraints \ref{eq:cons03} and \ref{eq:cons04} ensure that the decision variables $h_{ij}$ are non-negative integers.

\begin{align}
&\min_{ h_{ij}: (i,j)\in E } \qquad \sum_{(i,j)\in E}h_{ij} \label{eq:obj0}\\
&\textrm{s.t.} \quad \sum_{j:(i,j)\in E}h_{ij} - \sum_{j:(j,i)\in E}h_{ji} = \Delta_{i} \qquad \forall i \in R \label{eq:const01}\\
&\quad  \quad   \sum_{j:(i,j)\in E}h_{ij} \leq e_{i} \qquad \forall i \in R \label{eq:const02}\\
&\quad \quad h_{ij} \geq 0 \qquad \forall (i,j)\in E \label{eq:cons03}\\
&\quad \quad h_{ij} \in \mathbb{Z} \qquad \forall (i,j)\in E \label{eq:cons04}
\end{align}   

In formulation \ref{eq:obj0}--\ref{eq:cons04}, unless the total supply matches the total demand ($\sum_{i\in R}s^{v}_{i}=\sum_{i\in R}d^{v}_{i}$) and the network is strongly connected, the optimization problem may not have a feasible solution. Thus, we consider instead the revised formulation \ref{eq:obj1}--\ref{eq:cons15}, where $h_{i}$ corresponds to drivers added/removed from region $i$ by adjusting the total number of drivers in the network. Since adding or removing drivers would be costly to the platform (e.g., requires incentivizing new drivers or taking drivers offline), we associate a high cost $M$ with such transitions. As a result, in the optimal solution to formulation \ref{eq:obj1}--\ref{eq:cons15}, the total number of drivers is adjusted only if the targets could not be maintained internally via transitions of idle drivers across adjacent regions.

\begin{align}
&\min_{ h_{ij}: (i,j)\in E,\; h_{i}: i\in R } \qquad \sum_{(i,j)\in E}h_{ij} + M\sum_{i\in R}|h_{i}|\label{eq:obj1}\\
&\textrm{s.t.} \quad \sum_{j:(i,j)\in E}h_{ij} - \sum_{j:(j,i)\in E}h_{ji} + h_{i} = \Delta_{i} \qquad \forall i \in R \label{eq:const11}\\
&\quad  \quad   \sum_{j:(i,j)\in E}h_{ij} \leq e_{i} \qquad \forall i \in R \label{eq:const12}\\
&\quad \quad h_{ij} \geq 0 \qquad \forall (i,j)\in E \label{eq:cons13}\\
&\quad \quad h_{ij} \in \mathbb{Z} \qquad \forall (i,j)\in E \label{eq:cons14}\\
&\quad \quad h_{i} \in \mathbb{Z} \qquad \forall i\in R \label{eq:cons15}
\end{align}   

Let $h_{i\bullet}$ and $h_{\bullet i}$ be defined as in Equations \ref{eqn:hbul1} and \ref{eqn:hbul2}. In this case, $h_{\bullet i}$ corresponds to drivers added to region $i \in R$ by adjusting the total number of drivers, and $h_{i\bullet}$ corresponds to drivers removed from region $i \in R$ by adjusting the total number of drivers (i.e., $h_{i\bullet}$ represents drivers that can be removed from the system to avoid having excess idle drivers).

\begin{equation}
	\begin{aligned}
		h_{i\bullet} = 
		\begin{cases}
			h_{i} & \text{if} \quad h_{i}>0\\
			0 & \text{otherwise}
		\end{cases}
	\end{aligned}
	\label{eqn:hbul1}
\end{equation}

\begin{equation}
	\begin{aligned}
		h_{\bullet i} = 
		\begin{cases}
			|h_{i}| & \text{if} \quad h_{i}<0\\
			0 & \text{otherwise}
		\end{cases}
	\end{aligned}
	\label{eqn:hbul2}
\end{equation}

Moreover, for notational convenience in mapping the problem to a min-cost flow reformulation, define for each region $i \in R$ variables $h_{ii^{\star}}$ that represent the total number of drivers leaving region $i$ to adjacent regions (Equation \ref{eq:shiistar}). In addition, for each link $(i,j)\in E$, define variables $h_{i^{\star}j}=h_{ij}$. Thus, we can define $h_{ii^{\star}}$ in terms of $h_{i^{\star}j}$ as in Equation \ref{eq:shiistar2}. Since $h_{ij}$ is a non-negative integer for all $(i,j)\in E$, we have that $h_{ii^{\star}}$ and $h_{i^{\star}j}$ are non-negative integers as well.
\begin{align}
	h_{ii^{\star}} &= \sum_{j:(i,j)\in E}h_{ij} \qquad \forall i \in R \label{eq:shiistar}\\
	&=\sum_{j:(i,j)\in E}h_{i^{\star}j} \qquad \forall i \in R \label{eq:shiistar2}
\end{align}   

In Appendix B, through a sequence of reformulations, we show that optimization problem \ref{eq:obj1}--\ref{eq:cons15} reduces to the formulation \ref{eq:obj4}--\ref{eq:cons46ex}.

\begin{align}
&\min_{ h_{i^{\star}j}: (i,j)\in E,\; h_{i \bullet}, h_{\bullet i}, h_{ii^{\star}}: i\in R,\;\bar{h} } \qquad \sum_{i\in R}h_{ii^{\star}} + M\sum_{i\in R} \left[ h_{i \bullet} +  h_{\bullet i}\right] \label{eq:obj4}\\
&\textrm{s.t.} \quad h_{ii^{\star}} - \sum_{j:(j,i)\in E}h_{j^{\star}i} + h_{i \bullet} - h_{\bullet i} = \Delta_{i} \qquad \forall i \in R \label{eq:const41}\\
&\quad  \quad \sum_{i\in R}h_{\bullet i} + \bar{h} = \sum_{i\in R}d^{v}_{i} \label{eq:const42ex1}\\
&\quad  \quad -\left[\sum_{i\in R}h_{i \bullet} + \bar{h}\right] = -\sum_{i\in R}s^{v}_{i} \label{eq:const42ex2}\\
&\quad  \quad \sum_{j:(i,j)\in E}h_{i^{\star}j} - h_{ii^{\star}} = 0 \qquad \forall i\in R \label{eq:const42ex3}\\
&\quad \quad 0 \leq h_{ii^{\star}} \leq e_{i} \qquad \forall i\in R \label{eq:cons43ext}\\
&\quad \quad h_{i^{\star}j} \geq 0 \qquad \forall (i,j)\in E \label{eq:cons43}\\
&\quad \quad  h_{i \bullet}, h_{\bullet i} \geq 0 \qquad \forall i\in R \label{eq:cons44}\\
&\quad \quad  \bar{h} \geq 0  \label{eq:cons44ex}\\
&\quad \quad h_{ii^{\star}} \in \mathbb{Z} \qquad \forall i\in R \label{eq:cons46extra}\\
&\quad \quad h_{i^{\star}j} \in \mathbb{Z} \qquad \forall (i,j)\in E \label{eq:cons45}\\
&\quad \quad h_{i \bullet}, h_{\bullet i} \in \mathbb{Z} \qquad \forall i\in R \label{eq:cons46}\\
&\quad \quad  \bar{h} \in \mathbb{Z} \label{eq:cons46ex}
\end{align}

Consider the standard minimum cost flow problem given in formulation \ref{eq:objmcf}--\ref{eq:constmcf2} for a network $G'=(V,A)$ \citep{ahuja1993,wolsey1998}, where $c_{pq}$ is the cost of a unit flow on link $(p,q)\in A$, $x_{pq}$ are decision variables corresponding to flows on each link $(p,q)\in A$, $b_{p}$ is the equivalent of supply/demand at node $p$, and $u_{pq}$ is an upper bound on the flows $x_{pq}$ (i.e., capacity of link $(p,q)\in A$). A necessary condition for feasibility of the optimization problem is $\sum_{p\in V}b_{p}=0$.
\begin{align}
&\min_{x_{pq}:(p,q)\in A} \qquad \sum_{(p,q)\in A}c_{pq}x_{pq} \label{eq:objmcf}\\
&\textrm{s.t.} \quad \sum_{ \{q:(p,q)\in A \}   }x_{pq} - \sum_{ \{q:(q,p)\in A \}   }x_{qp} = b_{p} \qquad \forall p\in V \label{eq:constmcf1}\\
&\quad  \quad 0 \leq x_{pq} \leq u_{pq} \qquad \forall  (p,q) \in A \label{eq:constmcf2}
\end{align}  

Apart from the integrality constraints, the formulation \ref{eq:obj4}--\ref{eq:cons46ex} has the same structure as the minimum cost flow optimization problem \ref{eq:objmcf}--\ref{eq:constmcf2}; this implies that the constraint matrix associated with formulation \ref{eq:obj4}--\ref{eq:cons46ex} is totally unimodular. Thus, since $\Delta_{i}$, $d^{v}_{i}$, $s^{v}_{i}$, and $e_{i}$ are all integer values, each extreme point in the constraint set will be integral. Then, solving the linear programming relaxation in \ref{eq:obj5}--\ref{eq:cons54ex} will give us the \textit{integer optimal solution} of optimization problem \ref{eq:obj4}--\ref{eq:cons46ex}.

\begin{align}
&\min_{ h_{i^{\star}j}: (i,j)\in E,\; h_{i \bullet}, h_{\bullet i}, h_{ii^{\star}}: i\in R,\;\bar{h} } \qquad \sum_{i\in R}h_{ii^{\star}} + M\sum_{i\in R} \left[ h_{i \bullet} +  h_{\bullet i}\right] \label{eq:obj5}\\
&\textrm{s.t.} \quad h_{ii^{\star}} - \sum_{j:(j,i)\in E}h_{j^{\star}i} + h_{i \bullet} - h_{\bullet i} = \Delta_{i} \qquad \forall i \in R \label{eq:const51}\\
&\quad  \quad \sum_{i\in R}h_{\bullet i} + \bar{h} = \sum_{i\in R}d^{v}_{i} \label{eq:const52ex1}\\
&\quad  \quad -\left[\sum_{i\in R}h_{i \bullet} + \bar{h}\right] = -\sum_{i\in R}s^{v}_{i} \label{eq:const52ex2}\\
&\quad  \quad \sum_{j:(i,j)\in E}h_{i^{\star}j} - h_{ii^{\star}} = 0 \qquad \forall i\in R \label{eq:const52ex3}\\
&\quad \quad 0 \leq h_{ii^{\star}} \leq e_{i} \qquad \forall i\in R \label{eq:cons53ext}\\
&\quad \quad h_{i^{\star}j} \geq 0 \qquad \forall (i,j)\in E \label{eq:cons53}\\
&\quad \quad  h_{i \bullet}, h_{\bullet i} \geq 0 \qquad \forall i\in R \label{eq:cons54}\\
&\quad \quad  \bar{h} \geq 0  \label{eq:cons54ex}
\end{align}

The linear program \ref{eq:obj5}--\ref{eq:cons54ex} can be mapped to a minimum cost flow program \ref{eq:objmcf}--\ref{eq:constmcf2} applied on a transformed network illustrated in Figure \ref{fig:transfnet}. In particular, consider a source node $\text{SO}$ where links $\left(\text{SO},i\right)$ that connect $\text{SO}$ to region $i\in R$ dispatch flows $h_{\bullet i}$. In addition, consider a sink node $\text{SI}$ where links $\left(i,\text{SI}\right)$ that connect region $i\in R$ to $\text{SI}$ dispatch flows $h_{i \bullet}$. Let $\bar{h}$ represent the flow between $\text{SO}$ and $\text{SI}$. Then, observe that constraint \ref{eq:const51} is equivalent to constraint \ref{eq:constmcf1} at all un-starred nodes in the network transformation of Figure \ref{fig:transfnet}. Similarly, constraint \ref{eq:const52ex3} is equivalent to constraint \ref{eq:constmcf1} at all starred nodes. Constraint \ref{eq:const52ex1} corresponds to constraint \ref{eq:constmcf1} applied at the source node $\text{SO}$, and constraint \ref{eq:const52ex2} corresponds to constraint \ref{eq:constmcf1} applied at the sink node $\text{SI}$. In the network transformation, each link is associated with a $\left(\text{cost} , \text{capacity} \right)$ label. Observe that the objective function \ref{eq:obj5} can be obtained by plugging the link costs and flow variables in the minimum cost flow objective function \ref{eq:objmcf}. Also, observe that constraints \ref{eq:cons53ext}--\ref{eq:cons54ex} are the link capacity constraints \ref{eq:constmcf2} in the transformed network. Furthermore, by definition, $\sum_{i\in R}\Delta_{i}+\sum_{i\in R}d^{v}_{i}-\sum_{i\in R}s^{v}_{i}=0$; this implies that the necessary condition for feasibility in the minimum cost flow program ($\sum_{p\in V}b_{p}=0$) is satisfied. Thus, solving the linear program \ref{eq:obj5}--\ref{eq:cons54ex} is equivalent to solving the minimum cost flow program \ref{eq:objmcf}--\ref{eq:constmcf2} using the transformed network.

\begin{figure}[H]
	\centerline{\includegraphics[width=0.56\textwidth]{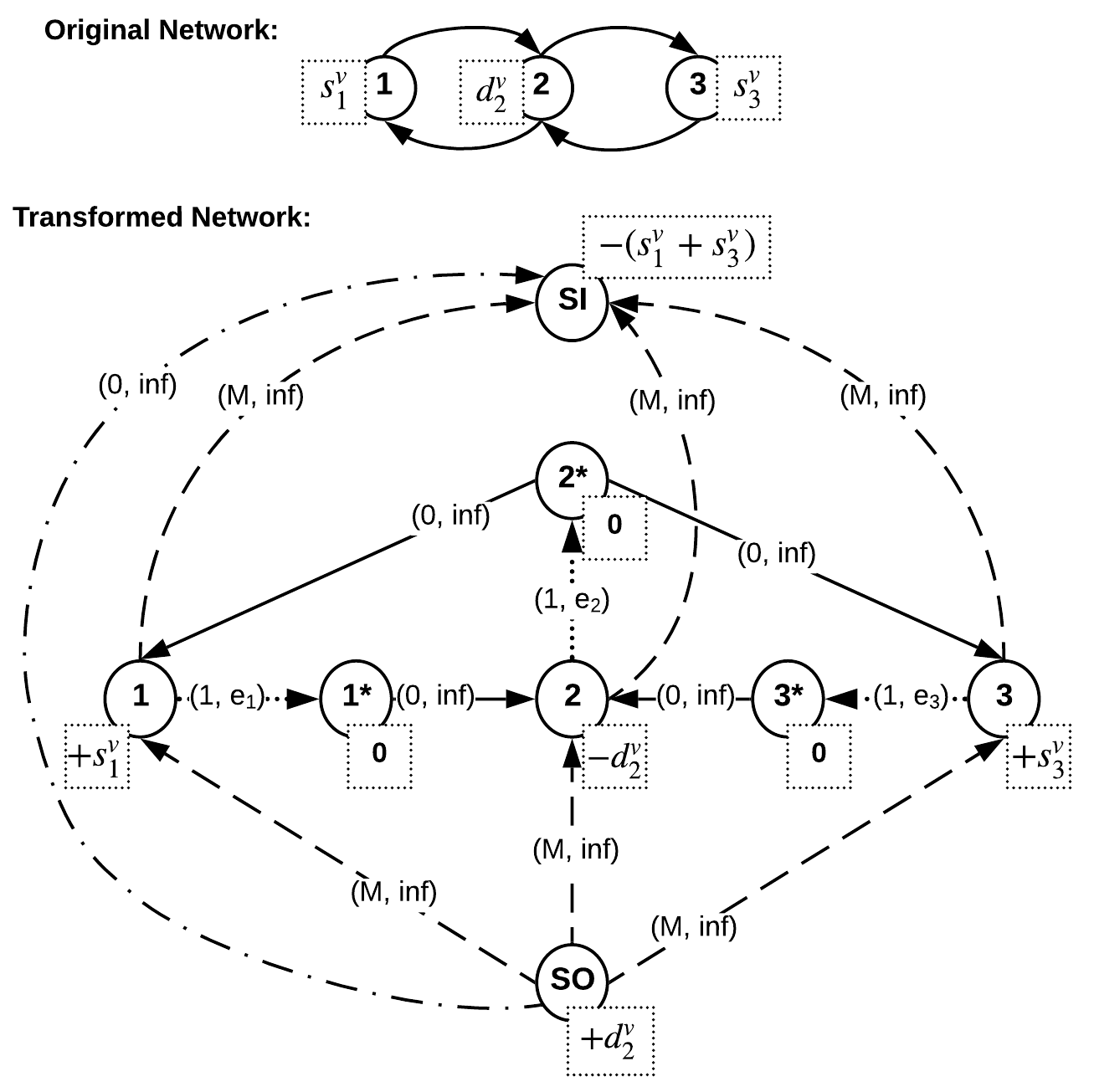}}
	\caption{Network transformation corresponding to the minimum cost flow program, where solving the integer program \ref{eq:obj1}--\ref{eq:cons15} using the original network is equivalent to solving the minimum cost flow program \ref{eq:objmcf}--\ref{eq:constmcf2} using the transformed network. Each link in the transformed network is associated with a $\left(\text{cost} , \text{capacity} \right)$ label. Each node in the transformed network is either a supply, demand, or transmission node such that values of $b_{p}$ in constraint \ref{eq:constmcf1} are within the squares.}
	\label{fig:transfnet}
\end{figure}

Consequently, since the integer program \ref{eq:obj1}--\ref{eq:cons15} reduces to formulation \ref{eq:obj5}--\ref{eq:cons54ex}, then solving the integer program \ref{eq:obj1}--\ref{eq:cons15} on the original network (Figure \ref{fig:transfnet}) is equivalent to solving the minimum cost flow program \ref{eq:objmcf}--\ref{eq:constmcf2} on the illustrated transformed network. As a minimum cost flow program, the driver dispatching and rebalancing optimization problem can be solved in polynomial time. The optimal solution of the optimization program represents recommended idle driver transitions that are needed to maintain the targets across regions. Specifically, the optimal solution includes idle drivers that should transition to adjacent regions \textit{and} idle drivers that should be added to the network by adjusting the total number of drivers in the system. In addition, the optimal solution also includes excess idle drivers that can be removed from the system.

\section{Simulation Results}
\label{sec:manhsim}

In this section, we present experimental results using data from Lyft operations in Manhattan, NYC on Friday December 14th, 2018 \citep{nyctlc2019}. We consider trips that started between 16:00--19:00 (local time) in four regions. The regions chosen roughly correspond to four sections of the city as illustrated in Figure \ref{fig:mhtn} (1-lower Manhattan, 2-midtown Manhattan, 3-upper west side, and 4-upper east side). For time windows of duration $w=20$ minutes, we use trip initiation and completion time data available on the New York City Taxi and Limousine Commission website to characterize the processes $\{f_{r}^{P,k}(t), f_{r}^{BA,k}(t), N^{k}_{r}(t):t\in \left( kw, (k+1)w\right] \}$. Our primary findings suggest that an increase in the fraction of book-ahead rides leads to a reduction in the total number of drivers that are needed to probabilistically guarantee the reach time service requirement. This reduction in the total number of drivers is also associated with a lower number of idling drivers (i.e., an increase in the driver utilization rate).

\begin{figure}
	\centerline{\includegraphics[width=0.4\textwidth]{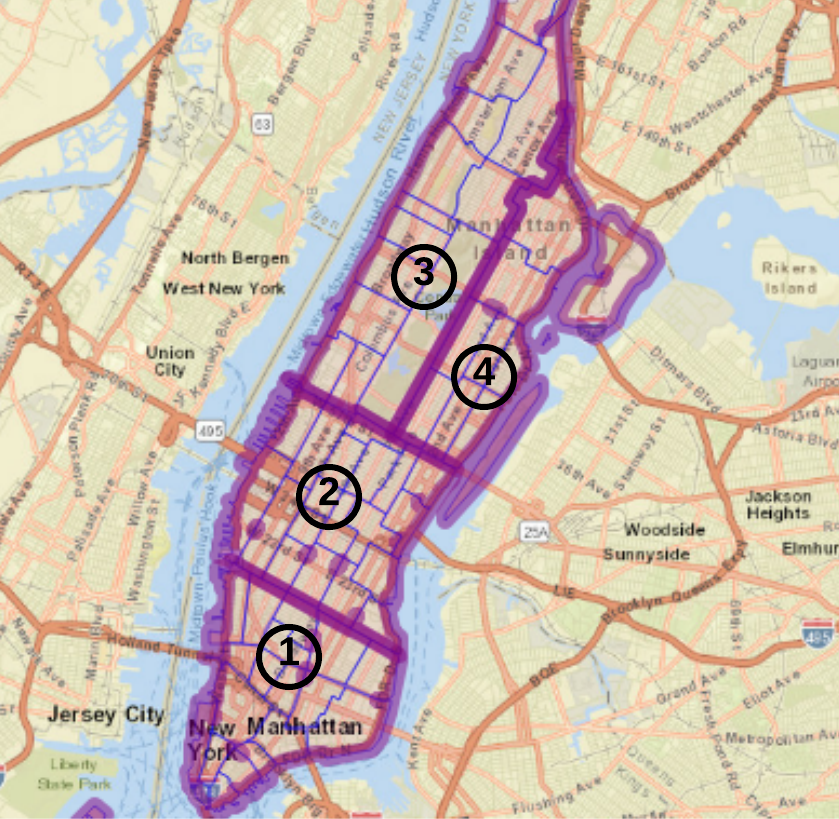}}
	\caption{Manhattan divided into four regions}
	\label{fig:mhtn}
\end{figure}

\subsection{System model specification and comparison to observed data}
\label{sec:simsystemspec}
The process $\{f_{r}^{P,k}(t):t\in \left( kw, (k+1)w\right] \}$ is generated at the beginning of every window $k$. Specifically, using the available data, $f_{r}^{P,k}(t)$ represents \textit{previously observed} rides that initiated in region $r$ prior to $t=kw$ and will be active at time $t\in \left( kw, (k+1)w\right]$.

To generate the process $\{f_{r}^{BA,k}(t):t\in\left( kw, (k+1)w\right]\}$ from the New York City data, we randomly sample a fraction $p_{BA}$ of the trips that start during window $k$ in region $r$. We choose to generate $f_{r}^{BA,k}(t)$ as the fraction of anticipated rides since we are interested in analyzing the change in the target number of drivers as the fraction of book-ahead rides increases.

As for the stochastic process $\{N^{k}_{r}(t):t\in \left( kw, (k+1)w\right] \}$, at the beginning of each window $k$, we calibrate the demand rate $\lambda^{k}_{r}$ corresponding to ride requests that will appear during the upcoming window in region $r$. In the following simulation, for simplicity, the demand rate varies across time-windows but is assumed constant within each time window; however, the proposed framework can be implemented using time-dependent demand rate functions by evaluating Equation \ref{eqn:rho2}. Moreover, even with window-constant demand rates, the Poisson distribution describing active drivers is time-varying within each window such that the mean is given by Equation \ref{eqn:rho2const}. We emphasize that this transient analysis does not assume an equilibrium or steady-state conditions in any time window. The arrival rate for region 2 is shown in Figure \ref{fig:arrivalrates}; as observed, the demand rate increases rapidly showing the need for non-equilibrium methods. For the distribution $g^{k}_{r}(\cdot)$ representing ride duration, we use the empirical distribution that is derived from the observed rides in each region. Note that to analyze the change in the target number of drivers with increasing book-ahead rides, we effectively assume that the arrival rate of non-reserved ride requests is $(1-p_{BA})\lambda^{k}_{r}
$ (where a fraction $p_{BA}$ of the anticipated trips that will initiate during window $k$ are book-ahead rides). 

\begin{figure}
	\centerline{\includegraphics[width=0.6\textwidth]{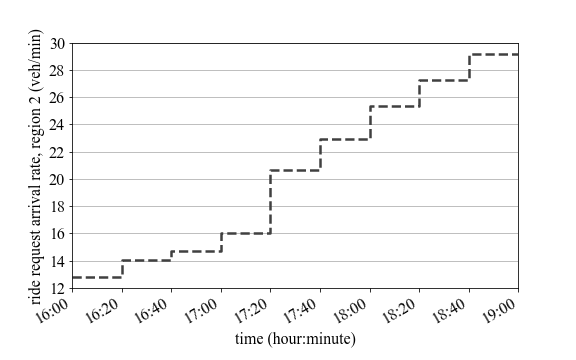}}
	\caption{Arrival rate for ride requests that initiate in region 2.}
	\label{fig:arrivalrates}
\end{figure}

As illustrated in Figure \ref{fig:modver}, the proposed model for predicting the number of active rides (Section \ref{sec:sysmod}) accurately represents the observed data. In this Figure, for comparison with observed trip data, we consider that all rides are admitted and that there are no book-ahead rides (effectively assuming  $N^{k}_{r}(t)=N^{k,\infty}_{r}  \left(  t \right)$). Recall that $N^{k}_{r}(t)$ represents the \textit{predicted} non-reserved ride requests that will appear during window $k$; in contrast, during window $k+1$, the process $f_{r}^{P,k+1}(t)$ consists of observed trips (as given in the data) that differ from the previously predicted trips.

\begin{figure}[H]
	\centerline{\includegraphics[width=0.6\textwidth]{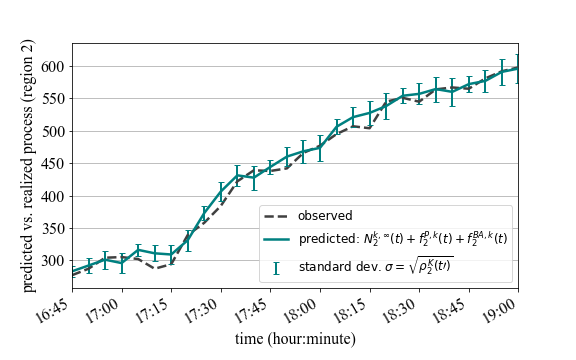}}
	\caption{Predicted total number of active rides vs. observed number of active rides, where predictions were made over time windows with a duration of $20$ minutes. The error bars correspond to one standard deviation of the time-dependent Poisson distribution characterizing $N^{k,\infty}_{r}$. In this figure, to compare with the observed trip data, we assume that all rides are admitted (i.e., we consider that $N^{k}_{r}(t)=N^{k,\infty}_{r}  \left(  t \right)$).}
	\label{fig:modver}
\end{figure}

\subsection{Upper bound on the blocking probability}
To evaluate how tight is the upper bound in Inequality 22, we implement the admission control policy in region 2 and average the observed proportion of blocked rides $B_{r}^{k}$ across time windows. For this upper bound numerical analysis, the assumptions involved in target evaluation and admission control apply; specifically, the total supply (active and idle) is maintained at the target level, drivers switch between active and idle within the region, and non-reserved rides are blocked if upon admission the total number of active rides would exceed the target at some point in time throughout the ride duration. Figure 9 shows the variation in the blocking proportion $B_{r}^{k}$ relative to the upper bound $\delta$. As observed, the blocking proportion $B_{r}^{k}$ increases with larger tolerance values. We also observe that the blocking proportion increases with the fraction of book-ahead rides $p_{BA}$ as a result of fewer idle drivers being available for non-reserved rides.

\begin{figure}[H]
	\centerline{\includegraphics[width=0.6\textwidth]{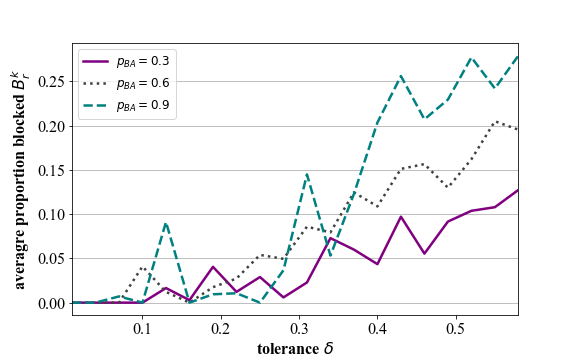}\includegraphics[width=0.6\textwidth]{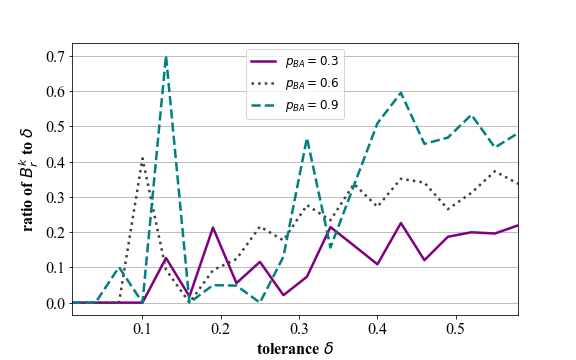}}
	\caption{The change in observed blocking proportion $B_{r}^{k}$ and the ratio $B_{r}^{k}$/$\delta$ relative to the upper bound $\delta$. }
	\label{fig:upper_bound}
\end{figure}

\subsection{Target computations, admission control, and minimum cost flow dispatching/rebalancing}
Then, to account for the spatial distribution of demand and the variation in supply across regions, we implement the proposed framework in Sections \ref{sec:sysmod}--\ref{sec:vehdip} (see Figure \ref{fig:rel}). In particular, we demand moves between regions and that the supply deviates from the target, and we implement the min. cost flow to maintain the target.

First, as mentioned in Section \ref{sec:simsystemspec}, we characterize the processes $\{f_{r}^{P,k}(t), f_{r}^{BA,k}(t), N^{k}_{r}(t):t\in \left( kw, (k+1)w\right] \}$ representing the predicted number of active rides in each region $r$. Then, using the upper bound on the time-dependent blocking probability of the admission control policy, we determine the target number of drivers associated with every region $r$ during the upcoming window. After that, at the beginning of the time window, we apply the driver dispatching/rebalancing mechanism to attain the targets across regions. Then, throughout the time window, for every non-reserved ride request that is received, we implement the admission control policy to determine whether the request should be admitted or blocked; the received non-reserved ride requests are directly retrieved from the New York City data (as opposed to the predictions $N^{k}_{r}(t)$). We also implement the driver dispatching/rebalancing mechanism halfway through the time window. However, at the beginning of the time window we allow for total adjustments of the driver supply while halfway through the window we consider that only existing idle drivers can transition across adjacent regions. This process is then repeated for every time window.

For simulation purposes, we disregard the stochasticity of drivers entering and exiting the system across time windows. However, the admission control policy, target computations, and subsequent driver dispatching policy allow for a time-varying and stochastic variation in the supply that is joining or leaving the platform. In fact, target evaluation is based on the demand process and the admission control assumes that the target is maintained throughout the time window. Even if the actual supply deviates from the target, the admission control policy is still implemented by finding if there are any idle drivers and measuring the change in idle drivers relative to the target. On the other hand, the driver dispatching is only concerned with the instantaneous state of the supply relative to the target.

Note that the presented driver rebalancing strategy only uses information from the current time window. In other words, while the proposed state-dependent strategy does not assume steady-state conditions in a time-varying environment, it does not look into future windows to determine the current rebalancing recommendations. Alternative policies that predict future dynamics multiple windows in advance may also be effective since they would have more information on the anticipated variation in driver supply.

\begin{figure}
	\centerline{\includegraphics[width=0.6\textwidth]{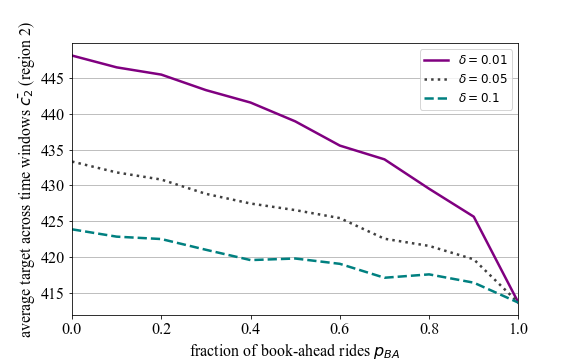}}
	\caption{Change in the time-averaged target number of drivers with an increase in the fraction of book-ahead rides (for different quality of service thresholds $\delta$). For each data point (i.e., every $\left(p_{BA},\delta\right)$ pair), the plotted time-averaged target is the average of the corresponding value obtained from 30 different iterations of the proposed framework, where this averaging is needed due to the randomness in generation of the book-ahead profile $f_{r}^{BA,k}(t)$. }
	\label{fig:target}
\end{figure}
We apply the same framework for different fractions of book-ahead rides and record the target $c^{k}_{r}$ across windows. In Figure \ref{fig:target}, we illustrate the change in targets for different fractions of book-ahead rides. In particular, we measure the time-averaged target $\bar{c}_{r}$ for increasing values of $p_{BA}$ and different quality of service thresholds $\delta$ (as defined in Section \ref{sec:avgblock}, $\delta$ bounds the time-averaged blocking probability such that a lower value of $\delta$ indicates a higher quality of service). As expected, we observe that the target number of drivers increases with decreasing $\delta$; this result implies that a larger number of drivers is needed to guarantee the reach time service requirement for a greater fraction of non-reserved ride requests. We also observe that the target number of drivers decreases as the fraction of book-ahead rides increases. The decrease in targets indicates that the number of drivers needed decreases with more information on anticipated trips. 

For the simulation setting, the ratio of internal driver transitions $\sum_{(i,j)\in E}h_{ij}$ to the total flows ($\sum_{(i,j)\in E}h_{ij}$ + $\sum_{i \in R}|h_{i}|$) was approximately 0.5 when averaged across min-cost flow evaluations. The recommended external flows reflect the additional drivers needed to satisfy increasing demand (Figure \ref{fig:arrivalrates}). This ratio depends on the demand rates, frequency of driver rebalancing, and the spatial distribution of regions. All these parameters would vary between different areas and time periods.

\begin{figure}
	\centerline{\includegraphics[width=0.6\textwidth]{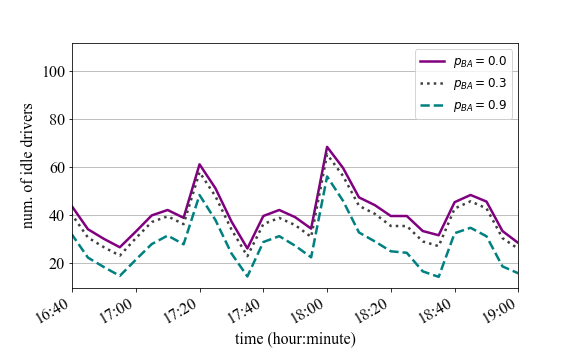}\includegraphics[width=0.6\textwidth]{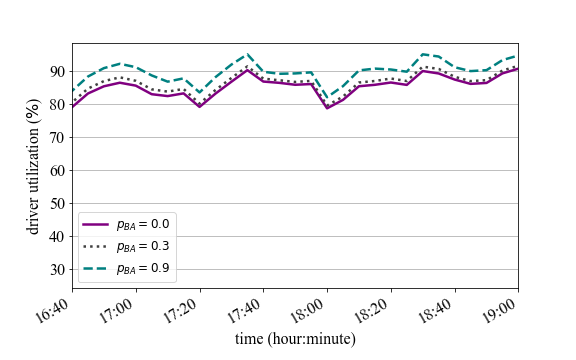}}
	\caption{The number of idle drivers and the driver utilization rate 100*(active/(active+idle)) averaged across regions. The quality of service threshold $\delta$ is set at 0.01.}
	\label{fig:idledr}
\end{figure}

As the target decreases with increasing fractions of book-ahead rides, the number of idling drivers in the system also decreases. Figure \ref{fig:idledr} illustrates the average number of idling drivers for different reservation levels. We observe that when $p_{BA}=0.9$ the average number of idle drivers can be up to $17.3$ less than the corresponding value when $p_{BA}=0.0$. This reduction in the number of idle drivers with increasing $p_{BA}$ translates to a higher driver utilization rate.

Figure \ref{fig:block} illustrates the average number of rides that are blocked by the admission control policy (i.e., the reach time service requirement was not met for these rides). As shown, the average number of blocked rides increases with reservation levels. This increase in blocking results from the reduction in the overall number of drivers in the system. However, the fraction of blocked requests is (mostly) within the specified threshold $\delta=0.01$. For $p_{BA}=0.9$, the fraction of blocked requests slightly exceeds the level of service threshold $\delta$; this discrepancy may be attributed to the randomness in the system and the fact that the targets are not perfectly maintained throughout the entire time window.

\begin{figure}[H]
	\centerline{\includegraphics[width=0.6\textwidth]{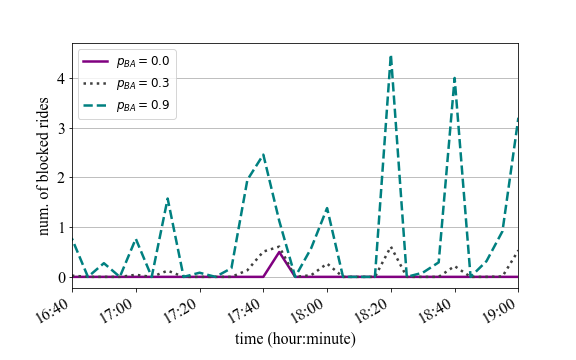}\includegraphics[width=0.6\textwidth]{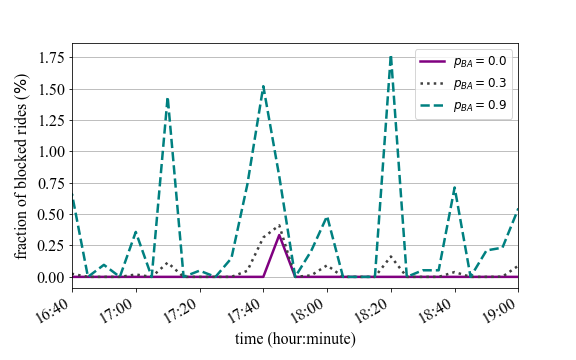}}
	\caption{The number of blocked ride requests and the fraction of blocked requests 100*(blocked/(admitted+blocked)) averaged across regions. The quality of service threshold $\delta$ is set at 0.01.}
	\label{fig:block}
\end{figure}

The previous analysis assumed perfect compliance with inter-regional driver transitions at the simulation-specific driver rebalancing stages (beginning and mid-window). However, the drivers may not follow platform recommendations and that would result in greater difficulty maintaining the targets. Figure \ref{fig:block-no-transitions} shows the number of blocked rides and fraction of blocked rides in the worst-case scenario where drivers do not follow inter-regional transition recommendations. As observed, the number of blocked rides almost doubles in some cases and the fraction of blocked rides also increases up to 3.5\%.

\begin{figure}[H]
	\centerline{\includegraphics[width=0.6\textwidth]{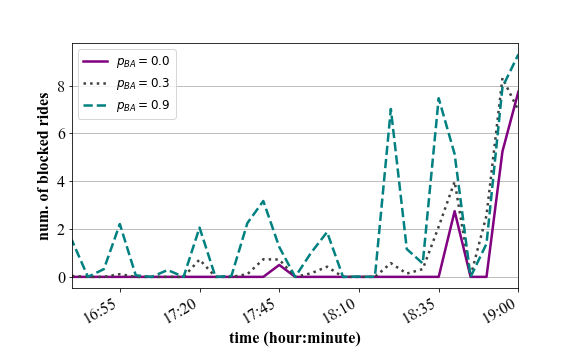}\includegraphics[width=0.6\textwidth]{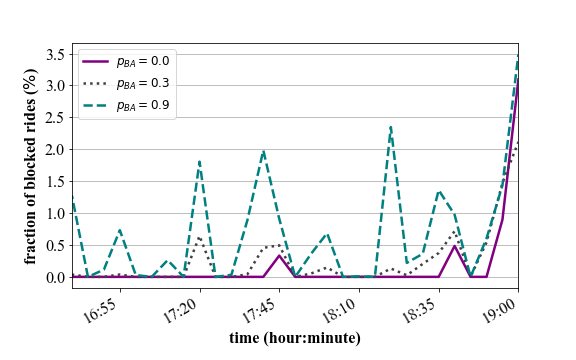}}
	\caption{For the case when idle drivers do not follow platform-recommended transitions between regions, we observe an increase in the number blocked rides and the fraction of blocked rides. The quality of service threshold $\delta$ is set at 0.01.}
	\label{fig:block-no-transitions}
\end{figure}

\section{Conclusion}
\label{sec:conc}

In this article, we propose a model for transient analysis of stochasticity in ridesourcing systems. As opposed to steady-state equilibrium methods, we characterize the time-dependent state of the system and design control policies for managing driver supply. Furthermore, we incorporate book-ahead rides (reservations) in our framework and analyze the impact of book-ahead rides on driver supply management.

In more detail, we propose a state-dependent control policy that assigns drivers to observed ride requests with the objective of guaranteeing the reach time service requirement for book-ahead rides. Then, we derive a time-dependent upper bound on the performance of the control policy, where the performance of the policy is measured in terms of the probability of reach time service violations for non-reserved rides. Subsequently, this upper bound is used to determine the target number of drivers that probabilistically guarantees the reach time service requirement for non-reserved rides. The targets represent the total number of drivers that are associated with a region such that the drivers are either idling in the region or serving requests that initiate in the region. Then, considering a set of regions with different targets, we propose a driver dispatching/rebalancing optimization program that seeks to maintain the targets across regions. We show that the dispatching/rebalancing problem reduces to a minimum cost flow program that is solved on a transformed network.

The key findings are as follows: (1) For the desired reach time quality of service, an increase in the fraction of book-ahead rides leads to a reduction in the total number of drivers required. (2) This reduction in the total number of drivers is associated with a decrease in the number of idling drivers. (3) Once the driver supply is decreased, there is a greater risk that the reach time service requirement will be violated for anticipated non-reserved rides. However, the fraction of rides that experience increased reach time beyond the reach time service requirement is within a specified threshold, where this threshold dictates the target number of required drivers. (4) For Lyft rides in Manhattan, we observe rapid variations in demand rates that emphasize the need for transient analysis of ridesourcing dynamics.

The proposed model can be used for operation of ridesourcing systems. Specifically, the proposed control policy can be used for ensuring reach time priority for book-ahead rides, the target supply determines the number of drivers that would probabilistically guarantee the reach time service requirement for non-reserved rides, and the minimum cost flow program determines the necessary driver dispatching/rebalancing that is needed to maintain the targets.

More importantly, the proposed model can inform policy decisions that seek to maximize driver welfare and to reduce congestion externalities associated with ridesourcing platforms. In particular, for a given quality of service and reach time service requirement, policy makers can determine if the ridesourcing platform is employing an excessive number of drivers by comparing the total number of drivers in the system to the target supply. In addition, our results suggest that policy makers should advocate for an increased fraction of book-ahead rides and supply management strategies that use this book-ahead information to reduce the number of idling drivers.

\section*{Supplementary Material}
Data and code used to generate results in this article are available on\\ \url{https://github.com/spartalab/book-ahead/}

\section*{Acknowledgments}
This research was supported by the National Science Foundation under Grants No. 1562291 and 1826320. Partial support was provided by the Data-Supported Transportation Operations and Planning center (D-STOP). The authors would like to thank Ahmad AlAmmouri for his insightful comments and suggestions.  

\section*{Appendix}
\subsection*{A. Theorem 1 Proof:}
\begin{appendixthm}
	The blocking probability, $P(\gamma_{i}=0)$, for the $i^{\text{th}}$ stochastic non-reserved ride request that appears at time $\tau_{i}$ is bounded above by $P\left( N^{k,\infty}_{r}  \left(  \tau_{i} \right) \geq  c^{k}_{r} - \displaystyle \max_{  t\in\left( \tau_{i}, (k+1)w \right] } \left[ f_{r}^{P,k}(t) + f_{r}^{BA,k}(t) \right]  \right)   $
\end{appendixthm}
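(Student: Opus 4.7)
The plan is to establish the theorem via a pathwise coupling argument followed by a monotonicity reduction and a deterministic majorization of the book-ahead and previously-committed processes. I will work on a single probability space on which the non-reserved arrival epochs $\{\tau_n\}$ and ride durations $\{D_n\}$ are used both to define the admission-controlled process $f_r^{A(\tau_i),k}(t)$ and to drive the reference $\text{M}_t/\text{GI}/\infty$ queue whose busy-server count is $N_r^{k,\infty}(t)$; by construction, the $n$-th Poisson arrival at time $\tau_n$ with service time $D_n$ contributes to $N_r^{k,\infty}(\cdot)$ regardless of $\gamma_n$.

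First, I would reduce the interval over which the blocking condition must be checked. Recall from Equations \eqref{eqn:gamma}--\eqref{eqn:prevAdm} that $\gamma_i = 0$ iff there exists $t \in (\tau_i, \min\{\tau_i + D_i, (k+1)w\}]$ such that
\begin{equation*}
f_r^{A(\tau_i),k}(t) \geq c_r^k - f_r^{P,k}(t) - f_r^{BA,k}(t).
\end{equation*}
For $t \geq \tau_i$, the sum $f_r^{A(\tau_i),k}(t) = \sum_{n=1}^{i-1} \mathbf{1}\{\tau_n + D_n > t\}\gamma_n$ is monotonically \emph{non-increasing} in $t$, because no new admissions are counted past $\tau_i$ and existing indicators only switch from one to zero. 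Hence $f_r^{A(\tau_i),k}(t) \leq f_r^{A(\tau_i),k}(\tau_i)$ for every $t$ in the interval of interest.

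Second, I would exhibit the coupling inequality $f_r^{A(\tau_i),k}(\tau_i) \leq N_r^{k,\infty}(\tau_i)$. This is the heart of the argument and the step I expect to be the main technical obstacle, since care is required to align the $i{-}1$ summands of $f_r^{A(\tau_i),k}(\tau_i)$ with the (pathwise) count of busy servers in the infinite-server queue at $\tau_i$. Because $\gamma_n \leq 1$ for every $n$, we have $\mathbf{1}\{\tau_n + D_n > \tau_i\}\gamma_n \leq \mathbf{1}\{\tau_n + D_n > \tau_i\}$, and the sum of the right-hand side over $n=1,\dots,i-1$ equals exactly the number of jobs in the reference $\text{M}_t/\text{GI}/\infty$ queue (which admits every arrival) still in service at $\tau_i$, namely $N_r^{k,\infty}(\tau_i)$. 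Chaining this with the monotonicity bound from the previous step yields $f_r^{A(\tau_i),k}(t) \leq N_r^{k,\infty}(\tau_i)$ for all $t \in (\tau_i, \min\{\tau_i + D_i,(k+1)w\}]$.

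Third, I would close the argument by deterministically dominating the book-ahead and previously-committed terms. On the event $\{\gamma_i = 0\}$ there exists some $t^\star$ in the admission-check interval at which
\begin{equation*}
N_r^{k,\infty}(\tau_i) \;\geq\; f_r^{A(\tau_i),k}(t^\star) \;\geq\; c_r^k - f_r^{P,k}(t^\star) - f_r^{BA,k}(t^\star) \;\geq\; c_r^k - \max_{t \in (\tau_i,(k+1)w]} \bigl[f_r^{P,k}(t) + f_r^{BA,k}(t)\bigr],
\end{equation*}
where the last inequality enlarges the admissible interval and therefore only weakens the constraint. Consequently $\{\gamma_i = 0\}$ is contained in the event on the right-hand side of the claimed bound, and taking probabilities completes the proof. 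The remaining subtlety to verify carefully is that $N_r^{k,\infty}(\tau_i)$ has the stated transient Poisson distribution regardless of the realization of the admission decisions for prior requests, which follows from the coupling because the $\text{M}_t/\text{GI}/\infty$ process is defined purely from the exogenous arrival and service primitives and does not feed back through $\gamma_1,\dots,\gamma_{i-1}$.
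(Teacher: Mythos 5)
Your proposal is correct and follows essentially the same route as the paper's proof: the three relaxations you use (dropping $\gamma_n\le 1$, exploiting that the count of still-active prior admissions is maximized at $t=\tau_i$, and enlarging the interval over which $f_r^{P,k}+f_r^{BA,k}$ is maximized) are exactly Inequalities (18)--(20) of the appendix, and the identification of $\sum_{n=1}^{i-1}\mathbf{1}\{\tau_n+D_n>\tau_i\}$ with $N^{k,\infty}_{r}(\tau_i)$ is the paper's closing step. Your pathwise event-inclusion phrasing (and the explicit remark that the infinite-server queue is driven by the exogenous primitives, independent of the admission decisions) is a slightly cleaner packaging of the same argument, not a different one.
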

\begin{proof}
	We first start by deriving upper bounds on the blocking probability $P(\gamma_{i}=0)$ (Inequalities \ref{eqn:p2}--\ref{eqn:p4}). Then, through Equations \ref{eqn:p5}--\ref{eqn:p9},  we show that the upper bound in Inequality \ref{eqn:p4} can be expressed in terms $N^{k,\infty}_{r}  \left(  \tau_{i} \right)$, where $N^{k,\infty}_{r}  \left(  \tau_{i} \right)$ is the number of busy servers at time $\tau_{i}$ in a transient $\text{M}_{t}/\text{GI}/\infty$ queue that starts empty at the beginning of the time window.
	\begin{flalign}
		&P(\gamma_{i}=0)&\\ 
		&=P\left( \exists t\in \left(\tau_{i}, \min\{\tau_{i}+D_{i}, (k+1)w \}  \right]: 1+ f_{r}^{P,k}(t) + f_{r}^{BA,k}(t) + \sum_{n=1}^{i-1} \mathbf{1}\{ \tau_{n}+D_{n}>t \}\gamma_{n} > c^{k}_{r} \right)&\\
		&\leq P\left( \exists t\in \left(\tau_{i}, \min\{\tau_{i}+D_{i}, (k+1)w \}  \right]: 1+ f_{r}^{P,k}(t) + f_{r}^{BA,k}(t) + \sum_{n=1}^{i-1} \mathbf{1}\{ \tau_{n}+D_{n}>t \} > c^{k}_{r} \right)\label{eqn:p2}&  \\
		&\leq P\left( \exists t\in \left(\tau_{i},  (k+1)w \right]: 1+ f_{r}^{P,k}(t) + f_{r}^{BA,k}(t) + \sum_{n=1}^{i-1} \mathbf{1}\{ \tau_{n}+D_{n}>t \} > c^{k}_{r} \right)\label{eqn:p3}&\\
		&\leq P\left( \exists t\in \left(\tau_{i},  (k+1)w \right]: 1+ f_{r}^{P,k}(t) + f_{r}^{BA,k}(t) + \sum_{n=1}^{i-1} \mathbf{1}\{ \tau_{n}+D_{n}>\tau_{i} \} > c^{k}_{r} \right)\label{eqn:p4}&
	\end{flalign}
	Inequality \ref{eqn:p2} holds since we are considering that all requests that are received before the $i^{\text{th}}$ request are admitted (i.e, $\gamma_{n}=1$ for all $n\in \{1,..., i-1\}$).\\
	Inequality \ref{eqn:p3} holds since we are expanding the time horizon until the end of the window.\\
	Inequality \ref{eqn:p4} follows since $\sum_{n=1}^{i-1} \mathbf{1}\{ \tau_{n}+D_{n}>\tau_{i} \} \geq \sum_{n=1}^{i-1} \mathbf{1}\{ \tau_{n}+D_{n}>t \}$. Specifically, the number of non-reserved ride requests that are received between $\left( kw, \tau_{i}\right]$ and are still active (being served) at time $\tau_{i}$ is \textit{at least as large as} the corresponding number of non-reserved ride requests that are received between $\left( kw, \tau_{i}\right]$ and are still active at time $t\in \left(\tau_{i},  (k+1)w \right]$ (i.e. $t\geq \tau_{i}$).\\
	Then, we can rearrange the last expression in Inequality \ref{eqn:p4} as follows:
	\begin{flalign}
		&P\left( \exists t\in \left(\tau_{i},  (k+1)w \right]: 1+ f_{r}^{P,k}(t) + f_{r}^{BA,k}(t) + \sum_{n=1}^{i-1} \mathbf{1}\{ \tau_{n}+D_{n}>\tau_{i} \} > c^{k}_{r} \right)\label{eqn:p4r}&\\
		&=1-P\left(  1+ f_{r}^{P,k}(t) + f_{r}^{BA,k}(t) + \sum_{n=1}^{i-1} \mathbf{1}\{ \tau_{n}+D_{n}>\tau_{i} \} \leq c^{k}_{r}, \quad \forall t\in \left(\tau_{i},  (k+1)w \right]   \right)\label{eqn:p5}&\\
		&=1-P\left(  1+ \max_{  t\in\left( \tau_{i}, (k+1)w \right] } \left[ f_{r}^{P,k}(t) + f_{r}^{BA,k}(t) \right] + \sum_{n=1}^{i-1} \mathbf{1}\{ \tau_{n}+D_{n}>\tau_{i} \} \leq c^{k}_{r}   \right)\label{eqn:p6}&\\
		&=P\left(  1+ \max_{  t\in\left( \tau_{i}, (k+1)w \right] } \left[ f_{r}^{P,k}(t) + f_{r}^{BA,k}(t) \right] + \sum_{n=1}^{i-1} \mathbf{1}\{ \tau_{n}+D_{n}>\tau_{i} \} > c^{k}_{r}   \right)\label{eqn:p7}&\\
		&=P\left(  \sum_{n=1}^{i-1} \mathbf{1}\{ \tau_{n}+D_{n}>\tau_{i} \} > c^{k}_{r} - \max_{  t\in\left( \tau_{i}, (k+1)w \right] } \left[ f_{r}^{P,k}(t) + f_{r}^{BA,k}(t) \right] - 1  \right)\label{eqn:p8}&\\
		&=P\left(  \sum_{n=1}^{i-1} \mathbf{1}\{ \tau_{n}+D_{n}>\tau_{i} \} \geq c^{k}_{r} - \max_{  t\in\left( \tau_{i}, (k+1)w \right] } \left[ f_{r}^{P,k}(t) + f_{r}^{BA,k}(t) \right]   \right)\label{eqn:p9}&
	\end{flalign}
	Equality \ref{eqn:p6} follows since $f_{r}^{P,k}(t) + f_{r}^{BA,k}(t)$ are the only components that depend on $t$ in expression \ref{eqn:p5}, and if the sum $ 1+ f_{r}^{P,k}(t) + f_{r}^{BA,k}(t) + \sum_{n=1}^{i-1} \mathbf{1}\{ \tau_{n}+D_{n}>\tau_{i} \}$ is less than or equal to $c^{k}_{r}$ at\\ $\displaystyle \tilde{t}=\argmax_{t\in\left( \tau_{i}, (k+1)w \right]}\left[ f_{r}^{P,k}(t) + f_{r}^{BA,k}(t) \right]$, then the aforementioned sum is less than or equal to $c^{k}_{r}$ for all $t\in\left( \tau_{i}, (k+1)w \right]$.\\
	Equality \ref{eqn:p9} follows since $\sum_{n=1}^{i-1} \mathbf{1}\{ \tau_{n}+D_{n}>\tau_{i} \}$, $\max_{  t\in\left( \tau_{i}, (k+1)w \right] } \left[ f_{r}^{P,k}(t) + f_{r}^{BA,k}(t) \right]$, and $c^{k}_{r} $ are all integer values representing the number of active drivers or driver supply. 
	
	Thus,
	\begin{flalign}
		&P(\gamma_{i}=0)\leq P\left(  \sum_{n=1}^{i-1} \mathbf{1}\{ \tau_{n}+D_{n}>\tau_{i} \} \geq c^{k}_{r} - \max_{  t\in\left( \tau_{i}, (k+1)w \right] } \left[ f_{r}^{P,k}(t) + f_{r}^{BA,k}(t) \right]   \right)&
	\end{flalign}
	let $N^{k,\infty}_{r}  \left(  \tau_{i} \right)=\sum_{n=1}^{i-1} \mathbf{1}\{ \tau_{n}+D_{n}>\tau_{i} \}$,\\
	Then,
	\begin{flalign}
		&P(\gamma_{i}=0)\leq P\left(  N^{k,\infty}_{r}  \left(  \tau_{i} \right) \geq c^{k}_{r} - \max_{  t\in\left( \tau_{i}, (k+1)w \right] } \left[ f_{r}^{P,k}(t) + f_{r}^{BA,k}(t) \right]   \right)&
	\end{flalign}
	$N^{k,\infty}_{r}  \left(  \tau_{i} \right)$ represents the number of stochastic non-reserved ride requests that are received between $\left( kw, \tau_{i}\right]$ and are active at time $\tau_{i}$. Thus, $N^{k,\infty}_{r}  \left(  \tau_{i} \right)$ is similar to $N^{k}_{r}\left(  \tau_{i} \right)$ with the main difference being that $N^{k}_{r}\left(  \tau_{i} \right)$ is restricted to admitted non-reserved ride requests while $N^{k,\infty}_{r}  \left(  \tau_{i} \right)$ accounts for \textit{all received requests} (i.e., $N^{k,\infty}_{r}  \left(  \tau_{i} \right)$ assumes that all requests are admitted regardless of the admission control policy). As previously described, stochastic non-reserved ride requests start arriving \textit{after the beginning of the time window} ($t=kw$) according to a Poisson process with demand rate $\{\lambda^{k}_{r}(t):t\in \left( kw, (k+1)w\right] \} $ and their ride duration follows the general distribution $g^{k}_{r}(\cdot)$. Then, the system corresponding to $N^{k,\infty}_{r}  \left(  \tau_{i} \right)$ can be described as a transient  $\text{M}_{t}/\text{GI}/\infty$ queue that \textit{starts empty} at $t=kw$, receives requests at the rate $\{\lambda^{k}_{r}(t):t\in \left( kw, (k+1)w\right] \} $, has a generally distributed service rate $g^{k}_{r}(\cdot)$, and has an infinite number of servers (all requests are admitted). In this context, $N^{k,\infty}_{r}  \left(  \tau_{i} \right)$ (the number of active rides at time $\tau_{i}$) represents the number of busy servers at time $\tau_{i}$ in the transient $\text{M}_{t}/\text{GI}/\infty$ queue.

\end{proof}

\subsection*{B. Minimum Cost Flow Reformulations:}

Original Formulation:
\begin{align}
	&\min_{ h_{ij}: (i,j)\in E,\; h_{i}: i\in R } \qquad \sum_{(i,j)\in E}h_{ij} + M\sum_{i\in R}|h_{i}|\label{eq:obj1-a}\\
	&\textrm{s.t.} \quad \sum_{j:(i,j)\in E}h_{ij} - \sum_{j:(j,i)\in E}h_{ji} + h_{i} = \Delta_{i} \qquad \forall i \in R \label{eq:const11-a}\\
	&\quad  \quad   \sum_{j:(i,j)\in E}h_{ij} \leq e_{i} \qquad \forall i \in R \label{eq:const12-a}\\
	&\quad \quad h_{ij} \geq 0 \qquad \forall (i,j)\in E \label{eq:cons13-a}\\
	&\quad \quad h_{ij} \in \mathbb{Z} \qquad \forall (i,j)\in E \label{eq:cons14-a}\\
	&\quad \quad h_{i} \in \mathbb{Z} \qquad \forall i\in R \label{eq:cons15-a}
\end{align}

First, observe that formulation \ref{eq:obj1-a}--\ref{eq:cons15-a} can be rewritten in terms of $h_{i\bullet}$ and $h_{\bullet i}$ that are defined in Equations \ref{eqn:hbul1-a} and \ref{eqn:hbul2-a}. The revised formulation is given in \ref{eq:obj2-a}--\ref{eq:cons26-a}. In this case, $h_{\bullet i}$ corresponds to drivers added to region $i \in R$ by adjusting the total number of drivers, and $h_{i\bullet}$ corresponds to drivers removed from region $i \in R$ by adjusting the total number of drivers (i.e., $h_{i\bullet}$ represents drivers that can be removed from the system to avoid having excess idle drivers).

\begin{equation}
	\begin{aligned}
		h_{i\bullet} = 
		\begin{cases}
			h_{i} & \text{if} \quad h_{i}>0\\
			0 & \text{otherwise}
		\end{cases}
	\end{aligned}
	\label{eqn:hbul1-a}
\end{equation}

\begin{equation}
	\begin{aligned}
		h_{\bullet i} = 
		\begin{cases}
			|h_{i}| & \text{if} \quad h_{i}<0\\
			0 & \text{otherwise}
		\end{cases}
	\end{aligned}
	\label{eqn:hbul2-a}
\end{equation}

\begin{align}
	&\min_{ h_{ij}: (i,j)\in E,\; h_{i \bullet}, h_{\bullet i}: i\in R } \qquad \sum_{(i,j)\in E}h_{ij} + M\sum_{i\in R} \left[ h_{i \bullet} +  h_{\bullet i}\right] \label{eq:obj2-a}\\
	&\textrm{s.t.} \quad \sum_{j:(i,j)\in E}h_{ij} - \sum_{j:(j,i)\in E}h_{ji} + h_{i \bullet} - h_{\bullet i} = \Delta_{i} \qquad \forall i \in R \label{eq:const21-a}\\
	&\quad  \quad   \sum_{j:(i,j)\in E}h_{ij} \leq e_{i} \qquad \forall i \in R \label{eq:const22-a}\\
	&\quad \quad h_{ij} \geq 0 \qquad \forall (i,j)\in E \label{eq:cons23-a}\\
	&\quad \quad  h_{i \bullet}, h_{\bullet i} \geq 0 \qquad \forall i\in R \label{eq:cons24-a}\\
	&\quad \quad h_{ij} \in \mathbb{Z} \qquad \forall (i,j)\in E \label{eq:cons25-a}\\
	&\quad \quad h_{i \bullet}, h_{\bullet i} \in \mathbb{Z} \qquad \forall i\in R \label{eq:cons26-a}
\end{align}   

Observe that due to the high costs associated with adjusting the total number of drivers, $h_{\bullet i} \leq d^{v}_{i}$ for every region $i$; this inequality implies that the amount of drivers added to region $i$ is less than demand in the region. Similarly, for every region $i$, $h_{i \bullet} \leq s^{v}_{i}$; this inequality implies that the number of drivers disposed from region $i$ (by adjusting the total number of drivers) is less than the virtual supply in the region. If we sum the latter two inequalities over all regions, we get inequalities \ref{eq:sumdem-a} and \ref{eq:sumsup-a}. Then, we can rewrite those inequalities using slack variables as shown in Equations \ref{eq:sumdemslack-a}--\ref{eq:slack-a}.
\begin{align}
	& \sum_{i\in R}h_{\bullet i} \leq \sum_{i\in R}d^{v}_{i} \label{eq:sumdem-a}\\
	& \sum_{i\in R}h_{i \bullet} \leq \sum_{i\in R}s^{v}_{i} \label{eq:sumsup-a}
\end{align} 

\begin{align}
	& \sum_{i\in R}h_{\bullet i} + \bar{h}_{d} = \sum_{i\in R}d^{v}_{i} \label{eq:sumdemslack-a}\\
	& \sum_{i\in R}h_{i \bullet} + \bar{h}_{s} = \sum_{i\in R}s^{v}_{i} \label{eq:sumsupslack-a}\\
	&\bar{h}_{d},\bar{h}_{s} \geq 0 \label{eq:slack-a}
\end{align}     
Intuitively, $\bar{h}_{d}$ is a slack variable that represents the \textit{demand} that is satisfied through internal driver transitions (as opposed to adding external drivers $h_{\bullet i}$ by adjusting the total number of drivers). Meanwhile, $\bar{h}_{s}$ is a slack variable that represents the \textit{supply} that is used to satisfy demand through internal driver transitions (as opposed to disposing off the supply $h_{i \bullet}$ by adjusting the total number of drivers). Therefore, $\bar{h}_{d}=\bar{h}_{s}$. A more rigorous approach to show that the equality holds is as follows:
\begin{lem}
	$\bar{h}_{d}=\bar{h}_{s}=\bar{h}$
\end{lem}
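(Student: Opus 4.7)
The plan is to derive the equality $\bar{h}_d = \bar{h}_s$ by aggregating the node-wise flow conservation constraint \ref{eq:const21-a} over all regions and exploiting the cancellation of inter-regional flows. First I would sum constraint \ref{eq:const21-a} over $i \in R$. On the left-hand side, the double sums $\sum_{i \in R}\sum_{j:(i,j)\in E} h_{ij}$ and $\sum_{i \in R}\sum_{j:(j,i)\in E} h_{ji}$ enumerate the \emph{same} multiset of directed-edge flows (each edge $(i,j) \in E$ is counted once as an out-flow at its tail $i$ and once as an in-flow at its head $j$), and therefore cancel exactly. What remains is
\begin{equation*}
\sum_{i \in R} h_{i\bullet} - \sum_{i \in R} h_{\bullet i} = \sum_{i \in R} \Delta_i.
\end{equation*}

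Next I would evaluate $\sum_{i \in R}\Delta_i$ using the piecewise definition \ref{eqn:delta} together with \ref{eqn:vsup} and \ref{eqn:vdem}. Since the two cases of \ref{eqn:delta} are mutually exclusive, a given region $i$ contributes either $-d^v_i$ (its demand case, with $s^v_i = 0$) or $+s^v_i$ (its supply case, with $d^v_i = 0$). Partitioning $R$ accordingly yields $\sum_{i \in R}\Delta_i = \sum_{i \in R} s^v_i - \sum_{i \in R} d^v_i$, where the identity holds unambiguously because $s^v_i$ and $d^v_i$ are never simultaneously positive at the same region.

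Substituting this identity into the aggregated balance equation and rearranging,
\begin{equation*}
\sum_{i \in R} s^v_i - \sum_{i \in R} h_{i\bullet} \;=\; \sum_{i \in R} d^v_i - \sum_{i \in R} h_{\bullet i}.
\end{equation*}
By the slack equations \ref{eq:sumdemslack-a} and \ref{eq:sumsupslack-a}, the right-hand side equals $\bar{h}_d$ and the left-hand side equals $\bar{h}_s$. Hence $\bar{h}_d = \bar{h}_s$, and we may denote the common value by $\bar{h}$, proving the lemma.

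I do not anticipate a real obstacle here; the argument is essentially a one-line aggregation once the cancellation of directed-edge flows is observed. The only point requiring care is the bookkeeping that $s^v_i$ and $d^v_i$ are complementary (so the sign of $\Delta_i$ is correctly accounted for in the sum), which is immediate from their definitions but is the step that ties the abstract slack identity back to the interpretation of $\bar{h}$ as the common volume of demand satisfied internally and supply used internally.
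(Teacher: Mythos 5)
Your proof is correct, and it reaches the conclusion by a more economical route than the paper's. The paper computes $\bar{h}_{d}$ and $\bar{h}_{s}$ separately: it first argues region by region that $h_{\bullet i}=0$ wherever $d^{v}_{i}=0$ and $h_{i\bullet}=0$ wherever $s^{v}_{i}=0$ (invoking the inequalities $h_{\bullet i}\le d^{v}_{i}$ and $h_{i\bullet}\le s^{v}_{i}$, themselves justified only informally via the large penalty $M$), restricts each sum to the regions with $\Delta_{i}<0$ or $\Delta_{i}>0$ respectively, substitutes the flow-conservation constraint in each case, and only then lets the link flows telescope in the difference $\bar{h}_{d}-\bar{h}_{s}$. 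You instead sum constraint \ref{eq:const21-a} over all regions at the outset, cancel the two enumerations of the directed-edge flows immediately, and close the argument with the single identity $\Delta_{i}=s^{v}_{i}-d^{v}_{i}$ (valid for every $i$ by complementarity of the two cases in \ref{eqn:delta}), so no case split on the sign of $\Delta_{i}$ and no appeal to the per-region vanishing of $h_{\bullet i}$ or $h_{i\bullet}$ is needed. The underlying mechanism --- aggregate flow conservation plus telescoping of inter-regional flows --- is the same in both proofs, but your version holds for any feasible point of the constraint system rather than only for solutions satisfying the optimality-motivated inequalities, which is a genuine (if modest) strengthening and a cleaner exposition.
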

\begin{proof}
	First, we rearrange Equation \ref{eq:sumdemslack-a} to arrive at Equation \ref{eq:cl0-a}. Then, we can restrict the sum to regions where $\Delta_{i}<0$ since by definition $d^{v}_{i}=0$ if $\Delta_{i}\geq 0$, and since $h_{\bullet i} \leq d^{v}_{i}$, then $h_{\bullet i}=0$ if $d^{v}_{i}=0$ (where $h_{\bullet i}\geq 0$ by definition). Thus, $\Delta_{i}\geq 0 \Rightarrow d^{v}_{i}=0 \Rightarrow  h_{\bullet i}=0$, and we can restrict the sum to $\Delta_{i}<0$ as shown in Equation \ref{eq:cl1-a}.\\
	Equation \ref{eq:cl2-a} follows by definition of $d^{v}_{i}$ and $\Delta_{i}$ when $\Delta_{i}<0$.\\
	Equation \ref{eq:cl3-a} follows by rearranging constraint \ref{eq:const21-a}. Note that since $\Delta_{i}<0$ then $s^{v}_{i}=0$ by definition, and since $h_{i \bullet} \leq s^{v}_{i}$ then $h_{i \bullet}=0$.
	\begin{flalign}
		\bar{h}_{d}&=\sum_{i\in R}d^{v}_{i}-h_{\bullet i}\label{eq:cl0-a}&\\ 
		&=\sum_{i\in R:\Delta_{i}<0 }d^{v}_{i}-h_{\bullet i}\label{eq:cl1-a}&\\
		&=\sum_{i\in R:\Delta_{i}<0 }-\Delta_{i}-h_{\bullet i}\label{eq:cl2-a}&\\
		&=\sum_{i\in R:\Delta_{i}<0 }\left[ \sum_{j:(j,i)\in E}h_{ji}-\sum_{j:(i,j)\in E}h_{ij}  \right] \label{eq:cl3-a}&
	\end{flalign}
	Following a similar approach, we can define $\bar{h}_{s}$ as illustrated in Equation \ref{eq:cl4-a}.
	\begin{flalign}
		\bar{h}_{s}&=\sum_{i\in R:\Delta_{i}>0 }\left[\sum_{j:(i,j)\in E}h_{ij} - \sum_{j:(j,i)\in E}h_{ji} \right]\label{eq:cl4-a}&
	\end{flalign}
	Then, we can represent the difference between $\bar{h}_{d}$ and $\bar{h}_{s}$ as in Equation \ref{eq:cl5-a}.\\
	Observe that if $\Delta_{i}=0$, then $\sum_{j:(j,i)\in E}h_{ji} = \sum_{j:(i,j)\in E}h_{ij}$, where this follows by constraint \ref{eq:const21-a} ($h_{i \bullet}=h_{\bullet i}=0$ since $h_{\bullet i} \leq d^{v}_{i}$, $h_{i \bullet} \leq s^{v}_{i}$ and $d^{v}_{i}=s^{v}_{i}=\Delta_{i}=0$).\\
	Thus, we can rearrange Equation \ref{eq:cl5-a} to get Equation \ref{eq:cl6-a}.\\
	Then, we can rearrange Equation \ref{eq:cl6-a} further to get Equations \ref{eq:cl7-a}.
	Finally, note that $\sum_{i\in R }\sum_{j:(j,i)\in E}h_{ji}$ is a summation over all links in the network, and similarly $\sum_{i\in R }\sum_{j:(i,j)\in E}h_{ij} $ is a summation over all links in the network. This gives Equation \ref{eq:cl8-a}, which proves the lemma.
	\begin{flalign}
		\bar{h}_{d} - \bar{h}_{s}&=\sum_{i\in R:\Delta_{i}<0 }\left[ \sum_{j:(j,i)\in E}h_{ji}-\sum_{j:(i,j)\in E}h_{ij}  \right] -  \sum_{i\in R:\Delta_{i}>0 }\left[\sum_{j:(i,j)\in E}h_{ij} - \sum_{j:(j,i)\in E}h_{ji} \right]\label{eq:cl5-a}&\\
		&=\sum_{i\in R }\left[ \sum_{j:(j,i)\in E}h_{ji}-\sum_{j:(i,j)\in E}h_{ij}  \right] \label{eq:cl6-a}\\
		&=\sum_{i\in R }\sum_{j:(j,i)\in E}h_{ji}-\sum_{i\in R }\sum_{j:(i,j)\in E}h_{ij}  \label{eq:cl7-a}\\
		&=\sum_{(i,j)\in E}h_{ij}-\sum_{(i,j)\in E}h_{ij}=0  \label{eq:cl8-a}
	\end{flalign}
\end{proof}
Subsequently, we can add Equations \ref{eq:sumdemslack-a}--\ref{eq:slack-a} as constraints in formulation \ref{eq:obj2-a}--\ref{eq:cons26-a}, where we use $\bar{h}=\bar{h}_{d}=\bar{h}_{s}$. The resulting formulation is shown in \ref{eq:obj3-a}--\ref{eq:cons36ex-a} (Equation \ref{eq:sumsupslack-a} is first multiplied by a negative sign and then added as a constraint). Note that $\bar{h}$ must be integer since, for each region $i$, $s^{v}_{i},d^{v}_{i},h_{\bullet i},h_{i \bullet}$ are all integer.

\begin{align}
	&\min_{ h_{ij}: (i,j)\in E,\; h_{i \bullet}, h_{\bullet i}: i\in R,\;\bar{h} } \qquad \sum_{(i,j)\in E}h_{ij} + M\sum_{i\in R} \left[ h_{i \bullet} +  h_{\bullet i}\right] \label{eq:obj3-a}\\
	&\textrm{s.t.} \quad \sum_{j:(i,j)\in E}h_{ij} - \sum_{j:(j,i)\in E}h_{ji} + h_{i \bullet} - h_{\bullet i} = \Delta_{i} \qquad \forall i \in R \label{eq:const31-a}\\
	&\quad  \quad   \sum_{j:(i,j)\in E}h_{ij} \leq e_{i} \qquad \forall i \in R \label{eq:const32-a}\\
	&\quad  \quad \sum_{i\in R}h_{\bullet i} + \bar{h} = \sum_{i\in R}d^{v}_{i} \label{eq:const32ex1-a}\\
	&\quad  \quad -\left[\sum_{i\in R}h_{i \bullet} + \bar{h}\right] = -\sum_{i\in R}s^{v}_{i} \label{eq:const32ex2-a}\\
	&\quad \quad h_{ij} \geq 0 \qquad \forall (i,j)\in E \label{eq:cons33-a}\\
	&\quad \quad  h_{i \bullet}, h_{\bullet i} \geq 0 \qquad \forall i\in R \label{eq:cons34-a}\\
	&\quad \quad  \bar{h} \geq 0  \label{eq:cons34ex-a}\\
	&\quad \quad h_{ij} \in \mathbb{Z} \qquad \forall (i,j)\in E \label{eq:cons35-a}\\
	&\quad \quad h_{i \bullet}, h_{\bullet i} \in \mathbb{Z} \qquad \forall i\in R \label{eq:cons36-a}\\
	&\quad \quad  \bar{h} \in \mathbb{Z} \label{eq:cons36ex-a}
\end{align}   

To map the problem to an equivalent min-cost flow formulation, for each region $i \in R$, we define variables $h_{ii^{\star}}$ that represent the total number of drivers leaving region $i$ to adjacent regions (Equation \ref{eq:shiistar-a}). In addition, for each link $(i,j)\in E$, we define variables $h_{i^{\star}j}=h_{ij}$. Thus, we can define $h_{ii^{\star}}$ in terms of $h_{i^{\star}j}$ as in Equation  \ref{eq:shiistar2-a}. Since $h_{ij}$ is a non-negative integer for all $(i,j)\in E$, we have that $h_{ii^{\star}}$ and $h_{i^{\star}j}$ are non-negative integers as well.
\begin{align}
	h_{ii^{\star}} &= \sum_{j:(i,j)\in E}h_{ij} \qquad \forall i \in R \label{eq:shiistar-a}\\
	&=\sum_{j:(i,j)\in E}h_{i^{\star}j} \qquad \forall i \in R \label{eq:shiistar2-a}
\end{align} 
Then, we can express constraint \ref{eq:const32-a} in terms of $h_{ii^{\star}}$ as $h_{ii^{\star}}\leq e_{i}$ for all regions $i\in R$. Moreover, we can express the sum of driver transitions across links $(i,j)\in E$ as shown in Equation \ref{eq:newobj-a}. 
\begin{align}
	\sum_{(i,j)\in E}h_{ij}= \sum_{i\in R}\sum_{j:(i,j)\in E}h_{ij}= \sum_{i\in R}h_{ii^{\star}} \label{eq:newobj-a}
\end{align} 
Therefore, we can reformulate optimization problem \ref{eq:obj3-a}--\ref{eq:cons36ex-a} in terms of the newly defined variables as follows: Substitute Equation \ref{eq:newobj-a} in the objective function \ref{eq:obj3-a}, replace the sum of drivers leaving a region to adjacent regions with $h_{ii^{\star}}$ (as in Equation \ref{eq:shiistar-a}), replace $h_{ij}$ by $h_{i^{\star}j}$ and $h_{ji}$ by $h_{j^{\star}i}$, replace constraint \ref{eq:const32-a} with $h_{ii^{\star}}\leq e_{i}$, add Equation \ref{eq:shiistar2-a} to the constraints, add constraints that restrict $h_{i^{\star}j}$ to be non-negative integers for all $(i,j)\in E$, and add constraints that restrict $h_{ii^{\star}}$ to be non-negative integers for all $i \in R$. The revised formulation is shown in \ref{eq:obj4-a}--\ref{eq:cons46ex-a}. 
\begin{align}
	&\min_{ h_{i^{\star}j}: (i,j)\in E,\; h_{i \bullet}, h_{\bullet i}, h_{ii^{\star}}: i\in R,\;\bar{h} } \qquad \sum_{i\in R}h_{ii^{\star}} + M\sum_{i\in R} \left[ h_{i \bullet} +  h_{\bullet i}\right] \label{eq:obj4-a}\\
	&\textrm{s.t.} \quad h_{ii^{\star}} - \sum_{j:(j,i)\in E}h_{j^{\star}i} + h_{i \bullet} - h_{\bullet i} = \Delta_{i} \qquad \forall i \in R \label{eq:const41-a}\\
	&\quad  \quad \sum_{i\in R}h_{\bullet i} + \bar{h} = \sum_{i\in R}d^{v}_{i} \label{eq:const42ex1-a}\\
	&\quad  \quad -\left[\sum_{i\in R}h_{i \bullet} + \bar{h}\right] = -\sum_{i\in R}s^{v}_{i} \label{eq:const42ex2-a}\\
	&\quad  \quad \sum_{j:(i,j)\in E}h_{i^{\star}j} - h_{ii^{\star}} = 0 \qquad \forall i\in R \label{eq:const42ex3-a}\\
	&\quad \quad 0 \leq h_{ii^{\star}} \leq e_{i} \qquad \forall i\in R \label{eq:cons43ext-a}\\
	&\quad \quad h_{i^{\star}j} \geq 0 \qquad \forall (i,j)\in E \label{eq:cons43-a}\\
	&\quad \quad  h_{i \bullet}, h_{\bullet i} \geq 0 \qquad \forall i\in R \label{eq:cons44-a}\\
	&\quad \quad  \bar{h} \geq 0  \label{eq:cons44ex-a}\\
	&\quad \quad h_{ii^{\star}} \in \mathbb{Z} \qquad \forall i\in R \label{eq:cons46extra-a}\\
	&\quad \quad h_{i^{\star}j} \in \mathbb{Z} \qquad \forall (i,j)\in E \label{eq:cons45-a}\\
	&\quad \quad h_{i \bullet}, h_{\bullet i} \in \mathbb{Z} \qquad \forall i\in R \label{eq:cons46-a}\\
	&\quad \quad  \bar{h} \in \mathbb{Z} \label{eq:cons46ex-a}
\end{align}


\newpage

{\footnotesize

}

\end{document}